\newtheorem{theorem}{Theorem}[section]
\numberwithin{equation}{section}
\def\ba#1\ea{\begin{align}#1\end{align}}
\def\bas#1\eas{\begin{align*}#1\end{align*}}
\newcommand{\mt}{\mapsto}
\newcommand{\bra}{\langle}
\newcommand{\ket}{\rangle}
\renewcommand{\Im}{\mathop{\rm Im}}
\newcommand\q{\quad}
\renewcommand{\d}{{\rm d}}
\newcommand{\e}{{\rm e}}
\newcommand\add{{\rm \bf add}}
\newcommand\sub{{\rm \bf sub}}
\newcommand\inve{{\rm \bf inv}}
\newcommand{\Nl}{\mathbb N}
\newcommand{\Zl}{\mathbb Z}
\newcommand{\Rl}{\mathbb R}
\newcommand{\Cl}{\mathbb C}
\newcommand{\ca}{\mathcal A}
\newcommand{\cc}{\mathcal C}
\newcommand{\ce}{\mathcal E}
\newcommand{\cf}{\mathcal F}
\newcommand{\ch}{\mathcal H}
\newcommand{\ci}{\mathcal I}
\newcommand{\cl}{\mathcal L}
\newcommand{\cp}{\mathcal P}
\newcommand{\cx}{\mathcal X}
     \newcommand{\fA}{\mathfrak{A}}
\renewcommand{\a}{\alpha}
\renewcommand{\b}{\beta}
\newcommand{\eps}{\epsilon}
\newcommand{\sig}{\sigma}
\newcommand{\w}{\omega}
\newcommand{\SO}{{\rm SO}}
\newcommand{\su}{\mathfrak{su}}
\newcommand{\SU}{{\rm SU}}
\newcommand{\U}{{\rm U}}
\title{On the space of generalized fluxes for loop quantum gravity\footnote{Published in: \href{http://stacks.iop.org/0264-9381/30/i=5/a=055008}{Class. Quantum Grav. 30 (2013) 055008} \copyright\, \href{http://iopscience.iop.org/page/copyright_notice}{IOP Publishing 2013}. This is an author-created, un-copyedited version of an article accepted for publication in Classical and Quantum Gravity. IOP Publishing Ltd is not responsible for any errors or omissions in this version of the manuscript or any version derived from it. The Version of Record is available online at \href{http://dx.doi.org/10.1088/0264-9381/30/5/055008}{doi:10.1088/0264-9381/30/5/055008}.}}
\author[a,b]{Bianca Dittrich,}
\author[a]{Carlos Guedes,}
\author[a]{Daniele Oriti}
\affiliation[a]{Max Planck Institute for Gravitational Physics, Am M\"uhlenberg 1, 14476 Golm, Germany}
\affiliation[b]{Perimeter Institute for Theoretical Physics, 31 Caroline St. N, Waterloo, ON N2L 2Y5, Canada}
\emailAdd{bianca.dittrich@aei.mpg.de, carlos.guedes@aei.mpg.de, daniele.oriti@aei.mpg.de}
\abstract{We show that the space of generalized fluxes -- momentum space -- for loop quantum gravity cannot be constructed by Fourier transforming the projective limit construction of the space of generalized connections -- position space -- due to the non-abelianess of the gauge group $\SU(2)$. From the abelianization of $\SU(2)$, $\U(1)^3$, we learn that the space of generalized fluxes turns out to be an inductive limit, and we determine the consistency conditions the fluxes should satisfy under coarse-graining of the underlying graphs. {We comment on the applications to loop quantum cosmology, in particular, how the characterization of the Bohr compactification of the real line as a projective limit opens the way for a similar analysis for LQC.}}
\keywords{Loop quantum gravity, loop quantum cosmology, cylindrical consistency, inverse limit, direct limit}
\begin{document}
\maketitle

%%%%%%%%%%%%%%%%%%%%%%%%%%%%%%%%%%%%%%%%%%%%%%%%%%%%%%%%%%%%%%
%%%%%%%%%%%%%%%%%%%%%%%%%%%%%%%%%%%%%%%%%%%%%%%%%%%%%%%%%%%%%%
\section{Introduction}
In standard quantum mechanics in flat space, the standard Fourier transform relates the two (main) Hilbert space representations in terms of (wave-)functions of position and of momentum, defining a duality between them.\footnote{In fact, since both manifolds, configuration space $G=\mathbb{R}^n$ and momentum space, defined as the Pontryagin dual $\widehat{G}=\mathbb{R}^n$, coincide, they are kinematically self-dual. However, the dynamics (e.g. Hamiltonian) may then differentiate between the two by not being symmetric under the exchange of position and momentum variables.} The availability of both is of course of practical utility, in that, depending on the system considered, each of them may be advantageous in bringing to the forefront different aspects of the system as well as for calculation purposes. Difficulties in defining a Fourier transform and a momentum space representation arise however as soon as the configuration space becomes non-trivial, in particular as soon as curvature is introduced, as in a gravitational context\footnote{The (restoration of) momentum-position duality, sometimes referred to as Born principle, has even been suggested as a guiding principle for the construction of theories of quantum gravity, particularly in the context of non-commutative geometry. See for example the recent work on the so-called ``relative locality" \cite{relativelocality}, and the earlier idea by Majid on ``co-gravity" \cite{cogravity}.}. No such definition is available in the most general case, that is in the absence of symmetries. On the other hand, in the special case of group manifolds or homogeneous spaces, when there is a transitive action of a group of symmetries on the configuration space, harmonic analysis allows for a notion of Fourier transform in terms of irreducible representations of the relevant symmetry group. This includes the case of phase spaces given by the cotangent bundle of a Lie group, when momentum space is identified with the corresponding (dual of the) Lie algebra, as it happens in loop quantum gravity \cite{TTbook}. However, a different notion of \textit{group Fourier transform} adapted to this group-theoretic setting has been proposed \cite{Freidel:2005me, Freidel:2005bb, Freidel:2005ec, Joung:2008mr, etera, matti}, and found several applications in quantum gravity models (see \cite{danielearistideSIGMA}). As it forms the basis of our analysis, we will introduce it in some detail in the following. Its roots can be traced back to the notions of quantum group Fourier transform \cite{Freidel:2005ec} and deformation quantization, being a map to non-commutative functions on the Lie algebra endowed with a star-product. The star-product reflects faithfully the choice of quantization map and ordering of the momentum space (Lie algebra) variables \cite{carlosmattidaniele}. As a consequence of this last point, observables and states in the resulting dual representation (contrary to the representation obtained by harmonic analysis) maintain a direct resemblance to the classical quantities, simplifying their interpretation and analysis.

\

Let us give a brief summary of the LQG framework. For more information about the intricacies of LQG refer to the original articles \cite{isham,Ashtekar:1993wf, Ashtekar:1994mh, Ashtekar:1994wa} or the comprehensive monograph \cite{TTbook}.
Loop quantum gravity is formulated as a symplectic system, where the pair of conjugate variables is given by holonomies $h_e[A]$ of an $\su(2)$-valued connection 1-form $A$ (Ashtekar connection) smeared along 1-dimensional edges $e$, and densitized triads $E$ smeared across 2-surfaces (electric fluxes). The smearing is crucial for quantization giving mathematical meaning to the distributional Poisson brackets, which among fundamental variables are
\bas
\{E^a_j(x),A^k_b(y)\}=\frac{\kappa}{2}\,\delta^a_b\,\delta^k_j\,\delta^{(3)}(x,y)\,,
\eas
where $a,b,c,\ldots$ are tangent space indices, and $i,j,k,\ldots$ refer to the $\su(2)$ Lie algebra. The same smearing leads to a definition of the classical phase space as well as of the space of quantum states based on graphs and associated dual surfaces. 

Since the smeared connection variables commute
\bas
\{h_e[A], h_{e'}[A]\}=0\,,
\eas
LQG is naturally defined in the connection representation. All holonomy operators can be diagonalized simultaneously and we thus have a functional calculus on a suitable space of generalized connections $\overline{\ca}$. 
 
A very important point is that despite the theory being defined on discrete graphs and associated surfaces, the set of graphs defines a directed and partially ordered set. Hence, refining any graph by a process called projective limit we are able to recover a notion of continuum limit. We will come back to this in the following, as we will attempt to define a similar continuum limit in terms of the conjugate variables.

Even though the triad variables Poisson commute, that is no longer the case for the flux variables:
\bas
E(S,f)=\int_S f^j\, E_j^a\, \eps_{abc} \,\d x^b\wedge \d x^c\,.
\eas
For instance, for operators smeared by two different test fields $f,g$ on the same 2-surface $S$ we have
\bas
\{E(S,f),E(S,g)\}\neq 0
\eas
if the Lie bracket $[f,g]^i=\eps^i_{jk}f^jg^k$ fails to vanish. Even smearing along two distinct surfaces, the commutator is again non-zero if the two surfaces intersect and the Lie bracket of the corresponding test fields is non-zero on the intersection. At first thought to be a quantization anomaly, in \cite{Ashtekar:1998ak} it was shown to be a feature that can be traced back to the classical theory. Thus, a simple definition of a momentum representation in which functions of the fluxes would act as multiplicative operators is not available. 

\

In the simplest case, for a given fixed graph, fluxes across surfaces dual to a single edge act as invariant vector fields on the group, and have the symplectic structure of the $\su(2)$ Lie algebra. Therefore, after the smearing procedure, the phase space associated to a graph is a product over the edges of the graph of cotangent bundles $T^*\SU(2)\simeq \SU(2)\times \su(2)^*$ on the gauge group.

\
 
Notwithstanding the fundamental non-commutativity of the fluxes, and taking advantage of the resulting Lie algebra structure and of the new notion of non-commutative Fourier transform mentioned above, in \cite{Baratin:2010nn}, a flux representation for loop quantum gravity was introduced. The work we are presenting here is an attempt to give a characterization of what the momentum space for LQG, defined through these new tools, should look like and how it can be constructed. 

\

Before presenting our results, let us motivate further the construction and use of such momentum/flux representation in (loop) quantum gravity (for an earlier attempt to define it, see \cite{lewandowski}). First of all, any new representation of the states and observables of the theory will in principle allow for new calculation tools that could prove advantageous in some situations. Most importantly, however, is the fact that a flux representation makes the geometric content of the same states and observables (and, in the covariant formulation, of the quantum amplitudes for the fundamental transitions that are summed over) clearer, since the fluxes are nothing else than metric variables. For the same reason, one would expect a flux formulation to facilitate the calculation of geometric observables and the coarse graining of states and observables with respect to geometric constraints \cite{coherentfluxes, coherentcollective} (we anticipate that the notion of coarse graining of geometric operators will be relevant also for our analysis of projective and inductive structures entering the construction of the space of generalized fluxes). The coupling of matter fields to quantum geometry is also most directly obtained in this representation \cite{johannesthomas}. Recently, new representations of the holonomy-flux algebra have been proposed for describing the physics of the theory around (condensate) vacua corresponding to diffeo-covariant, non-degenerate geometries \cite{HannoTimSigmaReview}, which are defined in terms of non-degenerate triad configurations, and could probably be developed further in a flux basis. 

 As flux representations encode more directly the geometry of quantum states, such representations might be useful for the coarse graining of geometrical variables \cite{improved,wroc,song}. In particular, \cite{ccd} presents a coarse graining in which the choice of representation and underlying vacuum is crucial.  

Finally, we recall that the flux representation has already found several applications in the related approaches of spin foam models and group field theories \cite{danielearistideSIGMA, aristidedanieleGFTmetric, aristidedanieleHolst, aristidedanieleGFTbarrett, GFTdiffeos, bubbles, bubblesjackets}, as well as in the analysis of simpler systems \cite{matti, mattidaniele}.

\

This flux representation was found by defining a group Fourier transform together with a $\star$-product on its image, first introduced in \cite{Freidel:2005me,Freidel:2005bb,Freidel:2005ec,Joung:2008mr} in the context of spin foam models. In this representation, flux operators act by $\star$-multiplication, and holonomies act as (exponentiated) translation operators. Using the projective limit construction of LQG, the group Fourier transform $\cf_\gamma$ was used to push-forward each level to its proper image, and in \cite{Baratin:2010nn} the following diagram was shown to commute
\ba
\begin{CD} 
\cup_\gamma\ch_\gamma @>\cf_\gamma>> \cup_\gamma \ch_{\star,\gamma}\\ 
@VV\pi V @VV \pi_{\star}V\\ 
\left(\cup_{\gamma}\ch_\gamma\right)/\sim @>\widetilde{\cf}>> \left(\cup_{\gamma}\ch_{\star,\gamma}\right)/\sim\\
\end{CD}
\label{eq:CD}
\ea
identifying the Hilbert space in the new triad representation as the completion of $\left(\cup_{\gamma}\ch_{\star,\gamma}\right)/\sim$. $\pi$ and $\pi_\star$ are the canonical projections with respect to the equivalence relation $\sim$ which is inherited from the graph structure (see section \ref{sec:nutshell}). In the connection representation we know that the (kinematical) Hilbert space is given by $\overline{\left(\cup_{\gamma}\ch_\gamma\right)/\sim}\simeq L^2(\overline{\ca},\d \mu_0)$, where $\overline{\ca}$ is the space of generalized connections and $\d\mu_0$ the Ashtekar-Lewandowski measure. Even if the Hilbert space in the triad representation can be defined by means of the projective limit, one would like to have a better characterization of the resulting space in terms of some functional calculus of generalized flux fields. Hence, the natural question is, can we write $\overline{\left(\cup_{\gamma}\ch_{\star,\gamma}\right)/\sim}\simeq L^2(\overline{\ce},\d\mu_{\star,0})$, for an appropriate space of generalized fluxes $\overline{\ce}$ and measure $\d\mu_{\star,0}$? This is precisely the issue we tackle in this paper.

\

We will see here that there are several obstructions to such a construction. First of all, when translating the projective limit construction on the connection side over to the image of the Fourier transform the notion of cylindrical consistency is violated whenever the gauge group is non-abelian. Thus, it is not possible to define the relevant cylindrically consistent $C^*$-algebra. This result is crucial since the space of generalized fluxes would arise as the corresponding spectrum\footnote{The spectrum of a $C^*$-algebra $\fA$ is the set of unitary equivalence classes of irreducible $^*$-representations or in the commutative case just the set of all non-zero $^*$-homomorphisms from $\fA$ to $\Cl$.} of this algebra. We note that even if it was possible to define such a cylindrically consistent $C^*$-algebra, the characterization that we are looking for would require a generalization of the Gel'fand representation theorem to noncommutative $C^*$-algebras, as the multiplication in the algebra is a noncommutative $\star$-product. Nevertheless, we can still learn something about the space of generalized fluxes by considering the abelianization of $\SU(2)$, that is, $\U(1)^3$. In fact, it has been shown that the quantization of linearized gravity leads to the LQG framework with $\U(1)^3$ as gauge group \cite{Varadarajan:2004ui}. It is even enough to work with one single copy of $\U(1)$, since the case $G=\U(1)^3$ is then simply obtained by a triple tensor product: not only the kinematical Hilbert space
\bas
\ch_{kin}^{\U(1)^3}=\ch_{kin}^{\U(1)}\otimes \ch_{kin}^{\U(1)} \otimes \ch_{kin}^{\U(1)}
\eas
has this simple product structure, but also the respective gauge-invariant subspaces decompose the same way \cite{Bahr:2007xa}.

\

The outline of the paper is the following: in the next section we briefly review the projective limit structure of LQG together with the notion of cylindrical consistency. Section \ref{sec:genflux} is the bulk of the paper. In \ref{app:noncylconv}, we start by showing the non-cylindrical consistency of the $\star$-product for non-abelian gauge groups, and proceed to the $\U(1)$ case. In subsection \ref{sec:gfft} we show that the space of generalized fluxes for $\U(1)$-LQG cannot be constructed as a projective limit, but in subsection \ref{sec:E_ind} we show how it arises as an inductive limit. The space of functions is then determined by pull-back giving rise to a suitable pro-$C^*$-algebra. In the conclusion \ref{sec:conc} we make some remarks on the analysis made here, and give an outlook on further work, in particular on the possibility of constructing a theory of loop quantum gravity tailored to the flux variables, and how the characterization of the Bohr compactification of the real line as a projective limit opens the way for a similar analysis for loop quantum cosmology.

%%%%%%%%%%%%%%%%%%%%%%%%%%%%%%%%%%%%%%%%%%%%%%%%%%%%%%%%%%%%
%%%%%%%%%%%%%%%%%%%%%%%%%%%%%%%%%%%%%%%%%%%%%%%%%%%%%%%%%%%%
\section{The notion of cylindrical consistency in a nutshell}
\label{sec:nutshell}
After identifying the space of generalized connections, i.e. the set $\overline{\ca}=\text{Hom}(\cp, G)$ of homomorphisms from the groupoid of paths $\cp$ to the group $G=\SU(2)$,  as the appropriate configuration space for loop quantum quantum gravity, the next step is to find the measure $\d\mu_0$ on this space to define the kinematical Hilbert space $\ch_0=L^2(\overline{\ca},\d\mu_0)$. This measure, called the  Ashtekar-Lewandowski measure, which is gauge and diffeomorphism invariant, is built by realizing $\ch_0$ as an inductive limit (also called direct limit) of Hilbert spaces $\ch_\gamma=L^2(\ca_\gamma,\d\mu_\gamma)$ associated to each graph $\gamma$ (for the definition of and more details on projective/inductive limits, refer to the \ref{app:projind}). The inductive structure is inherited by pullback from the projective structure in $\ca_\gamma=\text{Hom}(\overline{\gamma},G)$, where $\overline{\gamma}\subset\cp$ is the corresponding subgroupoid associated with $\gamma$. $\ca_\gamma$ is set-theoretically and topologically identified with $G^{|\gamma|}$, with $|\gamma|$ the number of edges in $\gamma$, and $\d\mu_\gamma$ defines the Haar measure on $G^{|\gamma|}$. Then, the identification of $\overline{\ca}$ with the projective limit (inverse limit) of $\ca_\gamma$ gives
\ba
\ch_0=\overline{\left(\cup_{\gamma}\ch_\gamma\right)/\sim}\,,
\label{lqg_hs}
\ea
where $\sim$ is an equivalence relation that determines the notion of cylindrical consistency (independence of representative) for the product between functions.

We remark that this result relies heavily on the Gel'fand representation theorem which states that any commutative $C^*$-algebra $\fA$ is isomorphic to the algebra of continuous functions that vanish at infinity over the spectrum of $\fA$, that is, $\fA\simeq C_0(\Delta(\fA))$.\footnote{More precisely, the Gel'fand representation theorem is an equivalence between the category of locally compact Hausdorff spaces and continuous proper maps and the opposite category of commutative $C^*$-algebras and proper $C^*$-morphisms.} The construction in \eqref{lqg_hs} is done at the level of $C(\ca_\gamma)$ in the sense that the spectrum of the commutative $C^*$-algebra $\overline{\cup_\gamma C(\ca_\gamma)/\sim}$ (the algebra of cylindrical functions) coincides with the projective limit of the $\ca_\gamma$'s. Compactness of $\ca_\gamma$ guarantees $C(\ca_\gamma)$ to be dense in $L^2(\ca_\gamma,\d\mu_\gamma)$. The existence of the measure is provided by the Riesz representation theorem (of linear functionals on function spaces) which basically states that linear functionals on spaces such as $C(\overline{\ca})$ can be seen as integration against (Borel) measures. The linear functional on $C(\overline{\ca})$ is then constructed by projective techniques through the linear functionals on $C(\ca_\gamma)$.

As already remarked, the existence of the projective limit guarantees the existence of a continuum limit of the theory despite it being defined on discrete graphs, at least at a kinematical level.\\

Since it will be important later on, let us describe in more detail the system of homomorphisms that give rise to the inductive/projective structure. Recall that the set of all embedded graphs in a (semi-) analytic manifold defines the index set over which the projective limit is taken. A graph $\gamma=(e_1,\ldots,e_n)$ is a finite set of analytic paths $e_i$ with 1 or 2-endpoint boundary (called edges), and we say that $\gamma$ is smaller/coarser than a graph $\gamma'$ (thus, $\gamma'$ is bigger/finer than $\gamma$), $\gamma\prec\gamma'$, when every edge in $\gamma$ can be obtained from a sequence of edges in $\gamma'$ by composition and/or orientation reversal. Then $(\ca_\gamma,\prec)$ defines a partially ordered and directed set, and we have, for $\gamma\prec\gamma'$, the natural (surjective) projections $p_{\gamma\gamma'}:\ca_{\gamma'}\rightarrow\ca_\gamma$ (restricting to $\ca_\gamma$ any morphism in $\ca_{\gamma'}$). These projections go from a bigger graph to a smaller graph and they satisfy
\bas
p_{\gamma\gamma'}\circ p_{\gamma'\gamma''}=p_{\gamma\gamma''}\,,\q \forall\, \gamma\prec\gamma'\prec\gamma''\,.
\eas
We thus have an inverse (or projective) system of objects and homomorphisms. These projections can be decomposed into three elementary ones associated to the three elementary moves from which one can obtain a larger graph from a smaller one compatible with operations on holonomies: (i) adding an edge, (ii) subdividing an edge, (iii) inverting an edge. See Figure \ref{fig:cyl}. 
\begin{figure}[t]
\centering
\includegraphics[scale=0.7]{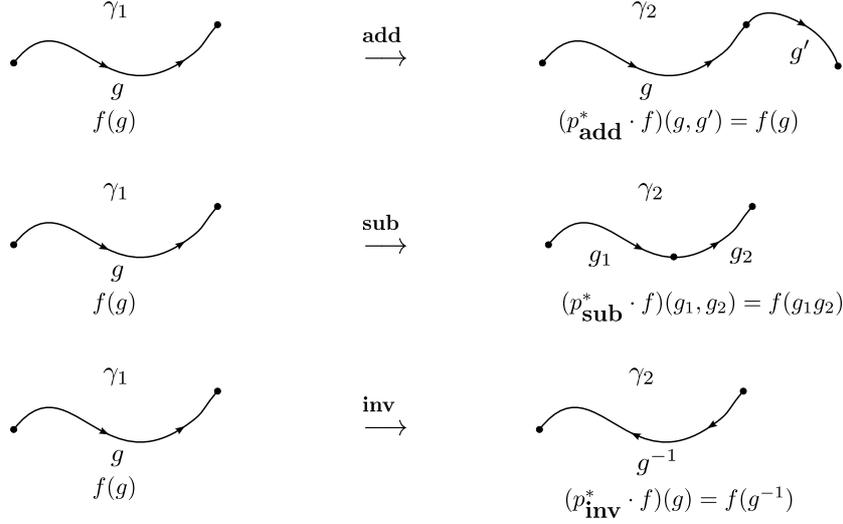}
\caption{The three elementary moves on graphs.}
\label{fig:cyl}
\end{figure}
Then,
\ba
& p_{\add}:\, \ca_{e,e'}\rightarrow \ca_e\,:\q\,\,\,\, (g,g') \mt g\nonumber\\
& p_\sub:\,\ca_{e_1,e_2}\rightarrow \ca_e\,;\q (g_1,g_2) \mt g_1g_2\nonumber\\
& p_\inve:\,\ca_e\rightarrow \ca_e\,;\qquad\;\;\;\;\;\;\;\;\;\;\; g \mt g^{-1}\,.
\label{eq:elementary}
\ea
The pullback of these defines the elementary injections for $\ch_\gamma$
\bas
 \add &:= p_\add^*:\, \ch_e\rightarrow \ch_{e,e'}\,;\q \,\,\,f(g)\mt (\add\cdot f)(g,g') =f(g)\\
 \sub &:= p_\sub^*:\, \ch_e\rightarrow \ch_{e_1,e_2}\,;\q  f(g)\mt (\sub\cdot f)(g_1,g_2) =f(g_1g_2)\\
 \inve &:= p_\inve^*:\, \ch_e\rightarrow \ch_e\,;\qquad\,\,\, f(g)\mt (\inve\cdot f )(g) =f(g^{-1})\,,
\eas
which determines the basic elements in $\ch_0$ in the same equivalence class, i.e. $[f]_\sim=\{f, \add\cdot f, \sub\cdot f, \inve\cdot f, \add\cdot\sub\cdot f,\add\cdot\inve\cdot f, \ldots\}$. Since $p^*_{\gamma\gamma'}: \ch_\gamma\rightarrow \ch_{\gamma'}$ go from a smaller graph to a bigger graph and satisfy
\bas
p^*_{\gamma'\gamma''}\circ p^*_{\gamma\gamma'}=p^*_{\gamma\gamma''}\,,\q \forall\, \gamma\prec\gamma'\prec\gamma''\,,
\eas
we have a direct (or inductive) system of objects and homomorphisms.

Let us check that the pointwise product in $\ch_0$ is indeed cylindrically consistent. Let $f,f'\in\ch_0$. By definition, we find graphs $\gamma,\gamma'$ and representatives $f_\gamma\in\ch_\gamma$, $f'_{\gamma'}\in\ch_{\gamma'}$ such that $f=[f_\gamma]_\sim$, $f'=[f'_{\gamma'}]_\sim$. Embed $\gamma,\gamma'$ in the common larger graph $\gamma''$, that is, $\gamma,\gamma'\prec\gamma''$. Then $f_{\gamma''}=p^*_{\gamma\gamma''}f_\gamma$, $f'_{\gamma''}=p^*_{\gamma'\gamma''}f'_{\gamma'}$, and $p^*_{\gamma\gamma''}f_\gamma=p^*_{\gamma'\gamma''}f_{\gamma'}$, $p^*_{\gamma\gamma''}f'_\gamma=p^*_{\gamma'\gamma''}f'_{\gamma'}$. Thus,
\bas
p^*_{\gamma\gamma''}(f_\gamma f'_\gamma)=p^*_{\gamma\gamma''}(f_\gamma)\, p^*_{\gamma\gamma''}(f'_\gamma)
=p^*_{\gamma'\gamma''}(f_{\gamma'})\, p^*_{\gamma'\gamma''}(f'_{\gamma'})=p^*_{\gamma'\gamma''}(f_{\gamma'}f'_{\gamma'})\,,
\eas
i.e. $f_\gamma f'_\gamma\sim f_{\gamma'}f'_{\gamma'}$, and the pointwise product does not depend on the representative chosen.
In terms of $\add,\sub,\inve$ this amounts to
\ba
\add\cdot (f\,f') & = (\add\cdot f)\, (\add\cdot f')\nonumber\\
\sub\cdot (f\,f') &= (\sub\cdot f)\, (\sub\cdot f')\nonumber\\
\inve\cdot (f\,f') &= (\inve\cdot f)\, (\inve\cdot f')\,,
\label{eq:cylconscond}
\ea
for $f,f'\in\ch_e$.

For a beautiful account on the structure of the space of generalized connections, refer to the article \cite{Velhinho:2004sf}.

%%%%%%%%%%%%%%%%%%%%%%%%%%%%%%%%%%%%%%%%%%%%%%%%%%%%%%%%%%%%%
%%%%%%%%%%%%%%%%%%%%%%%%%%%%%%%%%%%%%%%%%%%%%%%%%%%%%%%%%%%%%
\section{The space of generalized fluxes}
\label{sec:genflux}
This section constitutes the main part of the paper. The goal here is to define the analogue of the space of generalized connections $\overline{\ca}$ on the `momentum side' of the LQG phase space, that is the flux variables. The resulting space will be called the space of generalized fluxes $\overline{\ce}$. 

As said in the introduction, the natural approach for constructing such space fails for $G=\SU(2)$ (and any non-abelian group). The cylindrical consistency conditions used in defining the space of generalized connections are tailored to operations on holonomies ($h_{e_1\circ e_2}=h_{e_1}h_{e_2}$, $h_{e^{-1}}=h_{e}^{-1}$) and the non-abelianess of the group makes the translation to similar conditions on fluxes ill-defined. To show explicitly this difficulty is our first result. We are then constrained to work with the abelianization of $\SU(2)$, $\U(1)^3$, or rather $\U(1)$. Pushing-forward under the Fourier transform the projective limit construction of the space of generalized connections would lead to a definition of the space of generalized fluxes also as a projective limit. However, we will show that there is also an obstacle to this construction hitting again on the fact that the fluxes are significantly different from the connections.

Luckily, for $G=\U(1)$ the space of generalized connections is a true group opening in this way the possibility for a dual construction where the arrows are reversed. Thus, the projective limit is traded by an inductive limit and the previous problem disappears. The space of functions is finally defined by pull-back giving rise to a projective limit of $C^*$-algebras. Let us also note, at this point, that the $\U(1)$ case carries a further simplification, given by the fact that in this case the flux representation can be shown to be essentially equivalent to the charge network representation. We will clarify better in what sense this is true in section \ref{sec:E_ind}.

%%%%%%%%%%%%%%%%%%%%%%%%%%%%%%%%%%%%%%%%%%%%%%%%%%%%%%%%%%%%%%%%%%%%%%%
\subsection{The problems with $\SU(2)$}
\label{app:noncylconv}
Let us shortly summarize the commuting diagram from \cite{Baratin:2010nn}
\bas
\begin{CD} 
\cup_\gamma\ch_\gamma @>\cf_\gamma>> \cup_\gamma \ch_{\star,\gamma}\\ 
@VV\pi V @VV \pi_{\star}V\\ 
\left(\cup_{\gamma}\ch_\gamma\right)/\sim @>\widetilde{\cf}>> \left(\cup_{\gamma}\ch_{\star,\gamma}\right)/\sim\\
\end{CD}
\eas

For a single copy of $\SO(3)$ we define the noncommutative Fourier transform as the unitary map $\cf$ from $L^2(\SO(3),\d\mu_H)$, equipped with Haar measure $\d\mu_H$ (recently generalized to \SU(2) \cite{Dupuis:2011fx}), onto a space $L^2_\star(\Rl^3,\d\mu)$ of functions on $\su(2)\sim\Rl^3$ equipped with a noncommutative $\star$-product, and the standard Lebesgue measure:
\bas
\cf(f)(x)=\int_G \d g\,f(g)\,\e_g(x)\,,
\eas
where $\d g$ is the normalized Haar measure on the group, and $\e_g$ the appropriate plane-waves. The product is defined at the level of plane-waves as
\bas
\e_{g_1}\star \e_{g_2}=\e_{g_1g_2}\,,\q \forall\, g_1,g_2\in \SU(2)\,.
\eas
and extended by linearity to the image of $\cf$. As mentioned in the introduction, this non-commutative product is the result of a specific quantization map chosen for the Lie algebra part of the classical phase space \cite{carlosmattidaniele}.

The $\star$-product is crucial since it gives the natural algebra structure to the image of $\cf$, which is inherited from the convolution product in $L^2(\SO(3))$, that is, for $f,f'\in\ch_e$
\bas
\cf(f)\star\cf(f')=\cf(f * f')\,,
\eas
where the convolution product is as usual
\bas
(f * f')(g)=\int_G \d h\, f(gh^{-1})\, f'(h)\,.
\eas
We say that the $\star$-product is dual to convolution.

Extending to an arbitrary graph gives a family of unitary maps $\cf_\gamma:\ch_\gamma\rightarrow \ch_{\star,\gamma}$ labelled by graphs $\gamma$, where $\ch_\gamma:=L^2(\ca_\gamma,\d\mu_\gamma)\simeq L^2(\SO(3)^{|\gamma|},\d\mu_\gamma)$ and $\ch_{\star,\gamma}:=L^2_\star(\Rl^3)^{\otimes |\gamma|}$. Thus, we have the unitary map
\bas
\cf_\gamma:\, \ch_\gamma\rightarrow \ch_{\star,\gamma}\,,
\eas
and we want now to extend this to the full Hilbert space $\ch_0=\overline{\cup_\gamma\ch_\gamma/\sim}$. First, the family $\cf_\gamma$ gives a linear map $\cup_\gamma\ch_\gamma\rightarrow \cup_\gamma\ch_{\star,\gamma}$. In order to project it onto a well-defined map on the equivalence classes, we introduce the equivalence relation on $\cup_\gamma\ch_{\star,\gamma}$ which is `pushed-forward' by $\cf_\gamma$:
\bas
\forall\, u_{\gamma_i}\in \ch_{\star,\gamma_i}\,,\q u_{\gamma_1}\sim u_{\gamma_2}\q \Longleftrightarrow\q \cf_{\gamma_1}^{-1}(u_{\gamma_1})\sim \cf_{\gamma_2}^{-1}(u_{\gamma_2})\,.
\eas
That is, we have the injections $q_{\star,\gamma\gamma'}:\, \ch_{\star,\gamma}\rightarrow \ch_{\star,\gamma'}$ for all $\gamma\prec\gamma'$ defined dually by $q_{\star,\gamma\gamma'}\cf_\gamma:=\cf_\gamma\, p^*_{\gamma\gamma'}$. Using the definition, it is easy to see that they satisfy 
\ba
\label{inj_rel}
q_{\star,\gamma'\gamma''}\circ q_{\star,\gamma\gamma'}=q_{\star,\gamma\gamma''}\,,\q \forall\, \gamma\prec\gamma'\prec\gamma''\,,
\ea
i.e. we have an inductive system of objects and homomorphisms.

Finally, completion is now given with respect to the inner product pushed-forward by $\widetilde{\cf}$. That is, for any two elements $u,v$ of the quotient with representatives $u_{\gamma}\in\ch_{\star,\gamma}$ and $v_{\gamma'}\in\ch_{\star,\gamma'}$ the inner product is given by choosing a graph $\gamma''$ with $\gamma,\gamma'\prec \gamma''$ and elements $u_{\gamma''}\sim u_{\gamma}$ and $v_{\gamma''}\sim v_{\gamma'}$ in $\ch_{\star,\gamma''}$, and by setting
\bas
\langle u,v \rangle_{\widetilde{\cf}}:=\langle u_{\gamma''},v_{\gamma''}\rangle_{\cf_{\gamma''}}\,.
\eas
Since $\cf_\gamma$ are unitary transformations, the r.h.s. does not depend on the representatives $u_{\gamma}$, $v_{\gamma'}$ nor on the graph $\gamma''$. We thus have the complete definition of the full Hilbert space $\ch_{\star,0}=\overline{\left(\cup_{\gamma}\ch_{\star,\gamma}\right)/\sim}$ as an inductive limit. 

In \cite{Baratin:2010nn}, the question of cylindrical consistency of the star product in $\ch_{\star,0}$ was not posed and, as a Hilbert space, $\ch_{\star,0}$ makes perfect sense. However, to give the desired intrinsic characterization for $\ch_{\star,0}$ analogous to $\ch_0$, that is, to write $\ch_{\star,0}$ as $L^2(\overline{\ce},\d\mu_{\star,0})$ for some space of generalized fluxes $\overline{\ce}$ and measure $\d\mu_{\star,0}$, we need to make sure that $\ch_{\star,0}$ is well-defined as a $C^*$-algebra, in particular, that the $\star$-product is cylindrically consistent. As we have seen in section \ref{sec:nutshell} this amounts to the validity of Eqs.\,\eqref{eq:cylconscond} for the $\star$-product, or by duality, for the convolution product.

Then, for $\add$ we have
\bas
(\add\cdot (f * f')) (g,g')=(f * f') (g)\,,
\eas
and
\bas
((\add\cdot f) * (\add\cdot f'))(g,g') &=\int_{G^2}\d h \,\d h'\, (\add\cdot f)(gh^{-1},g'h'^{-1})\,(\add\cdot f')(h,h')\\
&=\int_{G^2} \d h\, \d h'\, f(gh^{-1})\, f'(h)=(f * f')(g).
\eas
So, it works for $\add$.

$\sub$ gives
\ba
(\sub\cdot (f * f')) (g,g')=(f * f') (gg')=\int_G \d h\, f(gg'h^{-1})\, f'(h)\,,
\label{convsub}
\ea
and
\bas
((\sub\cdot f) * (\sub\cdot f'))(g,g') &=\int_{G^2}\d h \,\d h'\, (\sub\cdot f)(gh^{-1},g'h'^{-1})\,(\sub\cdot f')(h,h')\\
&=\int_{G^2} \d h\, \d h'\, f(gh^{-1}g'h'^{-1})\, f'(hh')\\
&=\int_{G^2} \d h\, \d h'\, f(gh'h^{-1}g'h'^{-1}) f'(h)\,,
\eas
which matches \eqref{convsub} if and only if $G$ is abelian.

Lastly, $\inve$
\ba
(\inve\cdot (f * f')) (g)=(f * f') (g^{-1})=\int_G \d h\, f(g^{-1}h^{-1})\, f'(h)\,,
\label{convinve}
\ea
and
\bas
((\inve\cdot f) * (\inve\cdot f'))(g) &=\int_{G}\d h\, (\inve\cdot f)(gh^{-1})\,(\inve\cdot f')(h)\\
&=\int_{G} \d h\, f(hg^{-1})\, f'(h^{-1})\\
&=\int_{G} \d h\,  f(h^{-1}g^{-1}) f'(h)\,,
\eas
which again matches \eqref{convinve} if and only if $G$ is abelian.

Thus, the $\star$-product is not cylindrically consistent and consequently $\ch_{\star,0}$ is not a $C^*$-algebra. We emphasize that this result is independent of the specific Fourier transform used or of the specific form of the plane-waves, that is, any other quantization map chosen for the space of classical fluxes would have led to the same result. In order to have a well-defined algebra structure on the image of the Fourier transform we always need the multiplication to be dual to convolution, which, as we have just seen, is not cylindrically consistent unless the group $G$ is abelian. Let us also stress that similar issues would arise whenever one tries to define a kinematical continuum limit in variables dual to the connection and associated to surfaces. In particular, they would appear even using representation variables resulting from the Peter-Weyl decomposition, as they do in recent attempts to define refinement limits for the 2-complexes in the spin foam context \cite{refSF1,refSF2,refSF3,refSF4,refSF5,refSF6}.

As already said, the cylindrical consistency conditions are tailored to operations on holonomies, hence it is not too surprising that fluxes should not satisfy the same `gluing' conditions. Indeed, LQG kinematics treats connections and fluxes very asymmetrically. To understand better this asymmetry we will make the framework more symmetric by considering the abelianization of $\SU(2)$, $\U(1)^3$, where we can go further with the construction and still learn something about the space of generalized fluxes.

%%%%%%%%%%%%%%%%%%%%%%%%%%%%%%%%%%%%%%%%%%%%%%%%%%%%%%%%%%%%%%%%%%%%
%%%%%%%%%%%%%%%%%%%%%%%%%%%%%%%%%%%%%%%%%%%%%%%%%%%%%%%%%%%%%%%%%%%%
\subsection{The space of generalized fluxes by group Fourier transform: the abelian case}
\label{sec:gfft}

Loop quantum gravity with $\U(1)$ as gauge group is simpler in many aspects. In particular, the $\U(1)$ group Fourier transform and the $\star$-product reduce to the usual Fourier transform on the circle and the pointwise product, respectively. To avoid detouring too much from the main ideas of the text, we relegate to the \ref{sec:gft} an in-depth analysis of the $\U(1)$ group Fourier transform, where this is shown. Hence, $\cf$ is the unitary map
\bas
\cf(f)(x)=\frac{1}{2\pi}\int_{-\pi}^\pi \d\phi\, f(\phi)\,e^{-i\phi x}\,,
\eas
from $L^2(\U(1))\ni f$ onto $\ell^2(\Zl)$, the space of square-summable sequences (which has $C_0(\Zl)$ as a dense subspace), and the product on the image of $\cf$ is the usual pointwise product $(u\,v)(x)=u(x)\,v(x)$ for $u,v\in\ell^2(\Zl)$.
However, bear in mind that the $\U(1)$ group Fourier transform is fully defined on $\Rl$. That is, the conjugate variables to $\U(1)$ connections -- the fluxes -- are genuinely real numbers. It happens to be a feature of the $\U(1)$ group Fourier transform that it is sampled by its values on the integers\footnote{Notice the slightly analogous result for the $\SU(2)$ case, where a radial function on the image of the Fourier transform can be recovered from its values on the integers -- radial sampling theorem \cite{Freidel:2005ec}.} -- cf. \ref{sec:gft}.

The extension to an arbitrary graph and the projection onto the equivalence classes works out as in the previous subsection; the main difference is the abelian `$\star$-product' which now coincides with the pointwise product. It is still dual to convolution, but now that the group is abelian it is cylindrically consistent. We have the following result:

\begin{theorem}
$\ch_{\star,0}$ is a non-unital commutative $C^*$-algebra.
\label{thm:fyl_calg}
\end{theorem}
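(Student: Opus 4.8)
The plan is to establish the three assertions — $C^*$-algebra, commutative, non-unital — in that logical order, treating a single copy of $\U(1)$ (the $\U(1)^3$ case then follows by the triple tensor product noted in the introduction). The crucial input is already in hand: in the abelian case the $\star$-product is the pointwise product on sequences, and the computation of the previous subsection shows it is cylindrically consistent, i.e. Eqs.\,\eqref{eq:cylconscond} hold. Hence the pointwise product, together with the involution given by complex conjugation $u\mapsto\bar u$ (which is dual to $f\mapsto f^*$ on the group side and manifestly commutes with both the product and the elementary injections) and the obvious linear structure, descends to the quotient $\left(\cup_\gamma\ch_{\star,\gamma}\right)/\sim$. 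The first subtlety to flag is the choice of norm: the Hilbert ($\ell^2$) norm is submultiplicative but fails the $C^*$-identity $\|u^*u\|=\|u\|^2$, so the relevant $C^*$-norm is the supremum norm, and I would accordingly work with the dense subalgebra $C_0(\Zl^{|\gamma|})\subset\ch_{\star,\gamma}$ — the sup-norm closure of the Fourier image of $C(\ca_\gamma)=C(\U(1)^{|\gamma|})$ — each of which is a standard commutative $C^*$-algebra, and complete the quotient in this norm.

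Next I would verify that the connecting maps form an inductive system of commutative $C^*$-algebras with isometric $\star$-monomorphisms as arrows. Transcribing the three elementary moves to the Fourier side, $\add$ sends $u(x)\mapsto u(x)\delta_{y,0}$, $\sub$ sends $u(x)\mapsto u(x)\delta_{x,y}$, and $\inve$ sends $u(x)\mapsto u(-x)$. Each is a $\star$-homomorphism — multiplicativity is precisely the abelian case of \eqref{eq:cylconscond}, using $\delta^2=\delta$ for $\sub$, and compatibility with conjugation is immediate — and each is isometric for the sup norm, since it either relabels $\Zl$ bijectively or spreads a sequence onto a clopen subset ($\{y=0\}$, resp. the diagonal) of the discrete space $\Zl^{|\gamma'|}$, operations that preserve $\sup|u|$. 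Composing, the general $q_{\star,\gamma\gamma'}$ are isometric $\star$-monomorphisms satisfying \eqref{inj_rel}.

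It then follows by the standard construction of the inductive limit in the category of $C^*$-algebras that $\ch_{\star,0}=\overline{\left(\cup_\gamma\ch_{\star,\gamma}\right)/\sim}$ is a $C^*$-algebra: isometry of the connecting maps makes the sup norm of a class independent of the level and representative, giving a well-defined pre-$C^*$-norm on the quotient that is submultiplicative and satisfies the $C^*$-identity at every finite level, hence on the completion. Commutativity passes to the limit because the pointwise product is commutative at each level and commutation is preserved under limits. This already resolves the concern that motivated the theorem, namely that — in contrast to the $\SU(2)$ case — the $\star$-product here is cylindrically consistent and the algebra is genuinely well-defined.

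The part I expect to require the most care is non-unitality, which is also the conceptually important point since it encodes the non-compactness of the flux (momentum) space, in contrast to the compact configuration space $\overline{\ca}$. I would argue by contradiction: a unit $e$, with $\|e\|=1$, would be approximated within $1/2$ by some $a$ living at a finite level $\gamma_0$, i.e. $a\in C_0(\Zl^{|\gamma_0|})$; since $a$ vanishes at infinity one may choose a point $n\in\Zl^{|\gamma_0|}$ with $|a(n)|<1/2$, and then the point-mass $\delta_n$ gives $\|(e-a)\delta_n\|=|1-a(n)|>1/2$, contradicting $\|e-a\|\,\|\delta_n\|<1/2$. Equivalently and more structurally, the connecting maps are extensions-by-zero along clopen embeddings $\Zl^{|\gamma|}\hookrightarrow\Zl^{|\gamma'|}$ of discrete non-compact spaces, so the limit is $C_0$ of a non-compact discrete space and hence non-unital; this structural picture is exactly what the following subsection develops into the inductive-limit description of $\overline{\ce}$.
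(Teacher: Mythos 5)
Your proof is correct and follows essentially the same route as the paper: pass to the dense subalgebras $C_0(\Zl^{|\gamma|})$ with sup-norm, pointwise product and complex conjugation, and verify that the elementary injections $\add_\star,\sub_\star,\inve_\star$ are (isometric) $\star$-homomorphisms so that product and norm are cylindrically consistent on $\cup_\gamma C_0(\Zl^{|\gamma|})/\sim$. The only difference is that you supply an explicit contradiction argument for non-unitality (approximating a putative unit at a finite level and testing against a point mass $\delta_n$), a point the paper leaves implicit by simply inheriting non-unitality from each level; this is a welcome, correct addition rather than a different approach.
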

\begin{proof}
Strictly speaking, we are now looking at the Hilbert spaces $\ch_{\star,\gamma}=\ell^2(\Zl^{|\gamma|})$ at the algebraic level $C_0(\Zl^{|\gamma|})$ (which form dense subspaces). Each of the spaces $C_0(\Zl^{|\gamma|})$ is a non-unital commutative $C^*$-algebra with respect to complex conjugation, $\sup$-norm, and pointwise multiplication. Then, it just remains to check that the operations on the full algebra $\cup_\gamma C_0(\Zl^{|\gamma|})/\sim$, such as the product and the norm do not depend on the representative in each equivalence class, i.e. they are cylindrically consistent. 

For the ($\star$-)product this amounts to
\bas
(\add_\star\cdot u)\, (\add_\star\cdot v) &=\add_\star\cdot(u\, v)\,,\\
(\sub_\star\cdot u)\, (\sub_\star\cdot v) &= \sub_\star\cdot(u\, v)\,,\\
(\inve_\star\cdot u)\, (\inve_\star\cdot v) &= \inve_\star\cdot(u\, v)\,.
\eas
The action of $\add$, $\sub$, and $\inve$ is given below, Eq.\,\eqref{ind_func}. Explicitly, for $\add$, we have
\bas
((\add_\star\cdot u)\, (\add_\star\cdot v))(x_1,x_2) &=(\add_\star\cdot u)(x_1,x_2)\,(\add_\star\cdot v)(x_1,x_2)=u(x_1)\,\delta_{0,x_2}\,v(x_1)\,\delta_{0,x_2}\\ &=(u\, v )(x_1)\,\delta_{0,x_2}=(\add_\star\cdot (u\, v))(x_1,x_2)\,.
\eas
$\sub$ and $\inve$ can be shown similarly.

The norm
\bas
||u||=\sup_{\mathbf{x}\in \Zl^n}|u(\mathbf{x})|
\eas
satisfies
\bas
||\add_\star\cdot u||=||\sub_\star\cdot u||=||\inve_\star\cdot u||=||u||\,,
\eas
and is thus also well-defined.

Hence, $\cup_\gamma C_0(\Zl^{|\gamma|})/\sim$ is a non-unital commutative $C^*$-algebra.
\end{proof}

The definition of $\ch_{\star,0}$ as an inductive limit of abelian $C^*$-algebras sets us \emph{almost} on the same footing as the standard kinematical Hilbert space for loop quantum gravity $\ch_0$. The method used to determine the spectrum of the $C^*$-algebra relies heavily on the fact that if $(\fA_\a,p^*_{\a\b},\ci)$ is an inductive family of abelian $C^*$-algebras $\fA_\a$, where $\ci$ is a partially ordered index set, the inductive limit $\fA$ is a well-defined abelian $C^*$-algebra whose spectrum $\Delta(\fA)$ is a locally compact Hausdorff space homeomorphic to the projective limit of the projective family $(\Delta(\fA_\a),p_{\a\b},\ci)$.

As for the space of cylindrical functions, the inductive system of homomorphisms splits into three elementary ones defined dually by $q_{\star,\gamma\gamma'}:=\cf_\gamma\, p^*_{\gamma\gamma'}\cf_\gamma^{-1}$:
\ba
\add_\star &:= q_{\star,\add}:\, \ch_{\star,e}\rightarrow \ch_{\star,e,e'}\,,\q\,\,(\add_\star\cdot u)(x_1,x_2)  :=(\cf(\add\cdot f))(x_1,x_2)=u(x_1)\,\delta_{0,x_2}\,,\nonumber\\
\sub_\star &:= q_{\star,\sub}:\, \ch_{\star,e}\rightarrow \ch_{\star,e_1,e_2}\,,\q(\sub_\star\cdot u)(x_1,x_2) := (\cf(\sub\cdot f))(x_1,x_2)=u(x_1)\,\delta_{x_1,x_2}\,,\nonumber\\
\inve_\star &:= q_{\star,\inve}:\, \ch_{\star,e}\rightarrow \ch_{\star,e}\,,\qquad\;\;\;\;\;\;\;\;\; (\inve_\star\cdot u)(x) :=(\cf(\inve\cdot f))(x)=u(-x)\,,
\label{ind_func}
\ea
which again determine the elements in $\cup_\gamma\ch_{\star,\gamma}/\sim$ in the same equivalence class, i.e. $[u]_\sim=\{u, \add_\star\cdot u,\sub_\star\cdot u, \inve_\star\cdot u, \ldots \}$. 

Eqs.\,\eqref{ind_func} define our inductive system of functions through the injections $q_{\star,\gamma\gamma'}$. Recall that the usual procedure for LQG starts with the projections \eqref{eq:elementary}, and the injections at the level of functions are simply defined by pullback. Here we already have the system of injections \eqref{ind_func} and, should they exist, we want to determine the system of projections $p_{\star,\gamma\gamma'}$ that give rise to these injections. That is, are the injections $q_{\star,\gamma\gamma'}$'s the pullback of some projections $p_{\star,\gamma\gamma'}$'s: $q_{\star,\gamma\gamma'}=p_{\star,\gamma\gamma'}^*$? For the three elementary operations, we are looking for projections $p_{\star,\add}$, $p_{\star,\sub}$, and $p_{\star,\inve}$ such that
\bas
(p_{\star,\add}^*\, u)(x_1,x_2) &\equiv u(p_{\star,\add}(x_1,x_2))=u(x_1)\,\delta_{0,x_2}\,, \\
(p_{\star,\sub}^*\, u)(x_1,x_2) &\equiv u(p_{\star,\sub}(x_1,x_2))=u(x_1)\,\delta_{x_1,x_2}\,,\\
(p_{\star,\inve}^*\, u)(x) &\equiv u(p_{\star,\inve}(x))=u(-x)\,,
\eas
holds.

Using the fact that $u\in C_0$ and thus vanish at infinity, we can naively define the projections as
\ba
p_{\star,\add}(x_1,x_2) &:= \begin{cases}
x_1 & \text{if}\q x_2=0\\
\infty & \text{if}\q x_2\neq 0
\end{cases}\,,\nonumber \\
p_{\star,\sub}(x_1,x_2) &:= \begin{cases}
x_1 & \text{if}\q x_1=x_2\\
\infty & \text{if}\q x_1\neq x_2
\end{cases}\,,\nonumber\\
p_{\star,\inve}(x) &:= -x\,.\nonumber \\
\label{fyl_cons}
\ea
However, the above is rather formal and one runs into several technical problems in trying to justify the use of the infinity as an element of the target space of the projections. First of all, infinity does not belong to $\Zl$, so strictly speaking \eqref{fyl_cons} does not define a map. One could consider the one-point compactification (or Alexandroff extension) of the integers but this amounts to change the algebra itself (as it now becomes unital) and functions will not vanish at infinity anymore. One could try to make the limiting procedure precise by using the very definition of $u\in C_0(\Zl^{|\gamma|})$, that is, since $\Zl^{|\gamma|}$ is locally compact, there exists a compact set $K\subseteq \Zl^{|\gamma|}$ such that $|u(x)|<\eps$ for every $\eps>0$ and for every $x\in \Zl^{|\gamma|}\backslash K$. But this means that the whole procedure would only allow an indirect characterization of the underlying space (of generalized fluxes) through the behaviour of the space of functions. As a result, any such definition would fail to provide the intrinsic characterization of $\overline{\ce}$ that we are looking for.

Even though we do not have a proof that such projections $p_{\star,\gamma\gamma'}$ do not exist, it seems rather unnatural to force such a construction since the structure of connections and fluxes is significantly different. Indeed, one is the Fourier transform of the other and in the Fourier transform `arrows' are naturally reversed. In particular projections are changed into inductions and vice versa, at least in this abelian case. Indeed, we will see in the next subsection how reversing the arrows in the categorical sense makes it possible to define the space of generalized fluxes as an inductive limit\footnote{It is still not excluded that there might exist other methods of determining the spectrum which do not require the construction of the projective limit. If we could make sense of the projections and if the projective limit (of the discrete spaces $\Zl^{|\gamma|}$) could be defined, we would end up with a totally disconnected Hausdorff space, also known as complete ultrametric or non-Archimedean spaces.}.

%%%%%%%%%%%%%%%%%%%%%%%%%%%%%%%%%%%%%%%%%%%%%%%%%%%%%%%%%%%%%%%%%%%
%%%%%%%%%%%%%%%%%%%%%%%%%%%%%%%%%%%%%%%%%%%%%%%%%%%%%%%%%%%%%%%%%%%
%%%%%%%%%%%%%%%%%%%%%%%%%%%%%%%%%%%%%%%%%%%%%%%%%%%%%%%%%%%%%%%%%%%
\subsection{The space of generalized fluxes by duality}
\label{sec:E_ind}
The framework of $\U(1)$-LQG provides a different strategy for determining the space of generalized fluxes. The crucial point here is that for $G=\U(1)$ the space of generalized connections $\overline{\ca}$ is a true group, and the following theorem, giving the natural way of trading a projective system by an inductive system, is applicable. 
\begin{theorem}
Suppose $\ca_\gamma$ are abelian groups, and let $\overline{\ca}$ be the projective limit with projections $p_\gamma: \overline{\ca}\rightarrow \ca_\gamma$. Then, the dual group $\widehat{\overline{\ca}}$ equals the inductive limit of the dual groups $\widehat{\ca}_\gamma$.
\label{thm:proj-ind}
\end{theorem}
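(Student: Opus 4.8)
The plan is to invoke Pontryagin duality and reduce the statement to the assertion that every continuous character of the compact group $\overline{\ca}$ factors through one of the finite stages $\ca_\gamma$. First I would fix notation for the two systems. On the connection side $(\ca_\gamma,p_{\gamma\gamma'})$ is a projective system of abelian groups with surjective connecting maps $p_{\gamma\gamma'}:\ca_{\gamma'}\to\ca_\gamma$ for $\gamma\prec\gamma'$, and $\overline{\ca}=\varprojlim\ca_\gamma$ carries the canonical (surjective) projections $p_\gamma:\overline{\ca}\to\ca_\gamma$ satisfying $p_\gamma=p_{\gamma\gamma'}\circ p_{\gamma'}$. Applying the dual functor, i.e. taking continuous homomorphisms into $\U(1)$, I get, since duality is contravariant, injections $\widehat{p_{\gamma\gamma'}}:\widehat{\ca}_\gamma\to\widehat{\ca}_{\gamma'}$ and $\widehat{p_\gamma}:\widehat{\ca}_\gamma\to\widehat{\overline{\ca}}$ (duals of surjections are injective). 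The relation $\widehat{p_{\gamma'}}\circ\widehat{p_{\gamma\gamma'}}=\widehat{p_\gamma}$ shows that $(\widehat{\ca}_\gamma,\widehat{p_{\gamma\gamma'}})$ is an inductive system and that the $\widehat{p_\gamma}$ are compatible with it, so by the universal property of the inductive limit they assemble into a single homomorphism $\Phi:\varinjlim\widehat{\ca}_\gamma\to\widehat{\overline{\ca}}$. The theorem then amounts to proving that $\Phi$ is an isomorphism of topological groups.

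Injectivity of $\Phi$ I would read off directly from surjectivity of the projections: if $\chi_\gamma$ and $\chi_{\gamma'}$ induce the same character of $\overline{\ca}$, then, choosing a common refinement $\gamma''$ and precomposing with the surjection $p_{\gamma''}$, one forces $\widehat{p_{\gamma\gamma''}}(\chi_\gamma)=\widehat{p_{\gamma'\gamma''}}(\chi_{\gamma'})$ already in $\widehat{\ca}_{\gamma''}$, i.e. the two classes coincide in the inductive limit. The heart of the argument, and the step I expect to be the main obstacle, is surjectivity of $\Phi$: I must show that an arbitrary continuous character $\chi:\overline{\ca}\to\U(1)$ descends through some finite stage. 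Here I would use two topological inputs. First, $\overline{\ca}$ sits inside $\prod_\gamma\ca_\gamma$ with the subspace topology, so the kernels $\ker p_\gamma$ form a neighbourhood basis of the identity; second, $\U(1)$ has no small subgroups, meaning there is a neighbourhood $V$ of $1$ containing no nontrivial subgroup. By continuity $\chi^{-1}(V)$ contains some $\ker p_\gamma$, whence $\chi(\ker p_\gamma)$ is a subgroup of $\U(1)$ lying in $V$ and therefore trivial. Thus $\chi$ kills $\ker p_\gamma$ and factors as $\chi=\chi_\gamma\circ p_\gamma$ through $\overline{\ca}/\ker p_\gamma\cong\ca_\gamma$ (the isomorphism using surjectivity of $p_\gamma$), so that $\chi=\widehat{p_\gamma}(\chi_\gamma)=\Phi([\chi_\gamma])$.

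Finally I would settle the topology. Since $\overline{\ca}$ is a projective limit of the compact groups $\ca_\gamma=\U(1)^{|\gamma|}$ it is itself compact, hence its Pontryagin dual carries the discrete topology; each $\widehat{\ca}_\gamma\cong\Zl^{|\gamma|}$ is likewise discrete, and the inductive limit of discrete groups is discrete. Therefore the bijective homomorphism $\Phi$ is automatically a homeomorphism, and $\widehat{\overline{\ca}}=\varinjlim\widehat{\ca}_\gamma$ as topological groups. Conceptually this is the statement that the Pontryagin anti-equivalence converts the projective limit into an inductive one, but I expect the only genuine difficulty to be the factoring argument above; once the no-small-subgroups property of $\U(1)$ is in hand, the remaining verifications (compatibility, injectivity, and the collapse of the dual topology to the discrete one) are routine.
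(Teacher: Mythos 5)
Your proposal is correct, and in fact it proves strictly more than the paper's own proof does. The paper's argument consists only of the formal dualization step: it defines $\hat{p}_\gamma(\chi_\gamma):=\chi_\gamma\circ p_\gamma$ and $\hat{p}_{\gamma\gamma'}(\chi_\gamma):=\chi_\gamma\circ p_{\gamma\gamma'}$ and checks the composition law $\hat{p}_{\gamma'\gamma''}\circ\hat{p}_{\gamma\gamma'}=\hat{p}_{\gamma\gamma''}$, i.e.\ that the duals form an inductive system; the actual identification $\widehat{\overline{\ca}}\cong\varinjlim\widehat{\ca}_\gamma$ is left implicit, essentially as an appeal to standard Pontryagin duality for projective limits of compact abelian groups. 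You share the paper's first step (contravariance of the dual functor turning projections into injections) but then supply the substance the paper omits: the universal-property map $\Phi:\varinjlim\widehat{\ca}_\gamma\to\widehat{\overline{\ca}}$, injectivity of $\Phi$ from surjectivity of the $p_{\gamma''}$, surjectivity of $\Phi$ via the no-small-subgroups property of $\U(1)$ combined with the fact that the $\ker p_\gamma$ are cofinal among neighbourhoods of the identity, and the observation that all dual groups are discrete so the bijection is automatically a homeomorphism. The one hypothesis you use silently that deserves a sentence is the surjectivity of the canonical projections $p_\gamma:\overline{\ca}\to\ca_\gamma$ (needed both for injectivity of the $\hat{p}_\gamma$ and for the isomorphism $\overline{\ca}/\ker p_\gamma\cong\ca_\gamma$); this is not automatic for arbitrary projective limits, but follows here by a compactness argument because the $\ca_\gamma\cong\U(1)^{|\gamma|}$ are compact Hausdorff and the connecting maps $p_{\gamma\gamma'}$ are surjective. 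With that caveat made explicit, your proof is a complete and self-contained justification of the theorem, whereas the paper's is only the setup.
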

\begin{proof}
Let $p_\gamma:\overline{\ca} \rightarrow \ca_\gamma$ be the projections. Then, 
\bas
\hat{p}_\gamma:\, & \widehat{\ca}_\gamma \rightarrow \widehat{\overline{\ca}}\\ 
 &\chi_\gamma  \mapsto \hat{p}_\gamma(\chi_\gamma)\,,
\eas 
such that $\hat{p}_\gamma(\chi_\gamma)(g):=\chi_\gamma(p_\gamma(g))$, $g\in\overline{\ca}$, defines the morphisms in the dual system (direct).
In particular, the inverse system of mappings
\bas
p_{\gamma\gamma'}: \ca_{\gamma'}\rightarrow \ca_{\gamma}\,,\q \gamma\prec\gamma'\,,
\eas
where, for all $\gamma\prec\gamma'\prec\gamma''$, satisfy $p_{\gamma\gamma'}\circ p_{\gamma'\gamma''}=p_{\gamma\gamma''}$, gives rise to the corresponding direct system of mappings
\bas
\hat{p}_{\gamma\gamma'}: \widehat{\ca}_{\gamma}\rightarrow \widehat{\ca}_{\gamma'}\,,
\eas
where $\hat{p}_{\gamma\gamma'}(\chi_\gamma)(g_{\gamma'}):=\chi_\gamma(p_{\gamma\gamma'}(g_{\gamma'}))=\chi_\gamma(g_\gamma)$, for $g_{\gamma'}\in \ca_{\gamma'}$. Using the associativity for the inverse system, it is straightforward to show that
\bas
\hat{p}_{\gamma'\gamma''}\circ \hat{p}_{\gamma\gamma'}=\hat{p}_{\gamma\gamma''}\,,\q \forall\,\gamma\prec\gamma'\prec\gamma''\,,
\eas
that is, the mappings $\hat{p}_{\gamma\gamma'}$ do indeed define an inductive system.
\end{proof}

We are now in position to determine the sought for dual construction, that is, the inductive system. Recall that $\ca_\gamma$ may be identified with $\U(1)^{|\gamma|}$ and the Pontryagin dual is just $\widehat{\ca}_\gamma=\Zl^{|\gamma|}$ through the identification $\chi_{x_1,\ldots,x_\gamma}(z_1,\ldots, z_\gamma)=z_1^{x_1}\cdots z_\gamma^{x_\gamma}$, for $z_1,\ldots,z_\gamma\in \U(1)$ and $x_1,\ldots,x_\gamma\in \Zl$. Therefore, Eqs.\,\eqref{eq:elementary} give
\bas
& \hat{p}_{\add}:\, \Zl \rightarrow \Zl^2\,,\q x\mt \hat{p}_\add(x)\,,\\
& \hat{p}_{\sub}:\, \Zl \rightarrow \Zl^2\,,\q x\mt \hat{p}_\sub(x)\,,\\
& \hat{p}_{\inve}:\, \Zl \rightarrow \Zl\,,\q\;\; x\mt \hat{p}_\inve(x)\,,
\eas
whose actions on $z_1,z_2\in \U(1)$ are, respectively,
\bas
& \hat{p}_\add(\chi_x)(z_1,z_2)=\chi_x(p_\add(z_1,z_2))=\chi_x(z_1)=z_1^x=z_1^x\,z_2^0=\chi_{x,0}(z_1,z_2)\,,\\
& \hat{p}_\sub(\chi_x)(z_1,z_2)=\chi_x(p_\sub(z_1,z_2))=\chi_x(z_1\,z_2)=(z_1\,z_2)^x=z_1^x\,z_2^x=\chi_{x,x}(z_1,z_2)\,,\\
& \hat{p}_\inve(\chi_x)(z_1)=\chi_x(p_\inve(z_1))=\chi_x(z_1^{-1})=z_1^{-x}=\chi_{-x}(z_1)\,.
\eas

Thus, the embeddings are simply
\ba
& \hat{p}_{\add}:\, \Zl \rightarrow \Zl^2\,;\q x \mt (x,0)\,,\nonumber\\
& \hat{p}_{\sub}:\, \Zl \rightarrow \Zl^2\,;\q x \mt (x,x)\,,\nonumber\\
& \hat{p}_{\inve}:\,\Zl \rightarrow \Zl\,;\q\;\; x \mt -x\,,
\label{ind_labels}
\ea
and have a very nice flux interpretation which agrees with our intuition of how fluxes should behave under coarse-graining of the underlying graph: (i) adding an edge should not bring more information into the system, so the flux on the added edge is zero, (ii) subdividing an edge does not change anything and thus the flux through the subdivided edges is the same, (iii) inverting an edge just changes the direction of the flux, picking up a minus sign. See Figure \ref{fig:cyl_surf}.
\begin{figure}[t]
\centering
\includegraphics[scale=0.7]{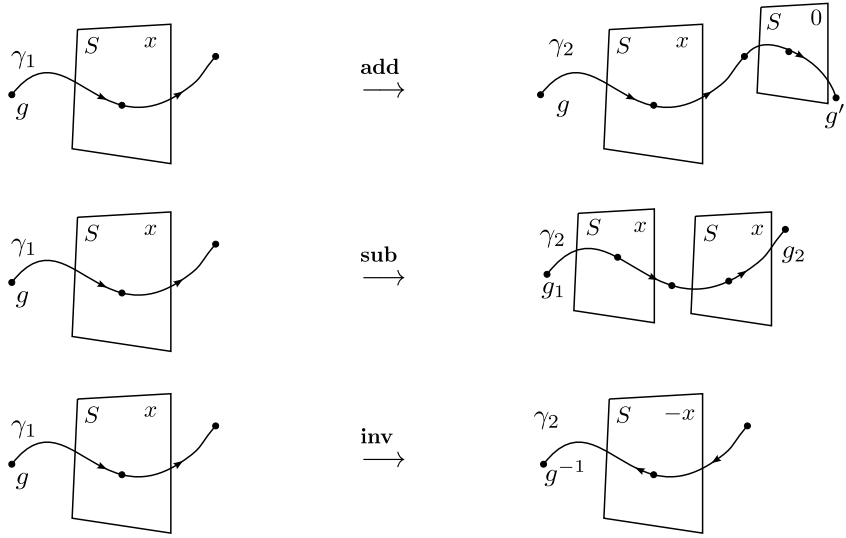}
\caption{Consistency conditions for fluxes across surfaces associated with the three elementary moves on graphs.}
\label{fig:cyl_surf}
\end{figure}

Using Theorem \ref{thm:proj-ind} we know that Eqs.\,\eqref{ind_labels} define an inductive system. Hence, we may define the space of generalized fluxes for $\U(1)$-LQG as the inductive limit of $\Zl^{|\gamma|}$'s which agrees with the Pontryagin dual $\overline{\ce}=\widehat{\overline{\ca}}=\widehat{\text{Hom}(\cp,\U(1))}=\text{Hom}(\text{Hom}(\cp,\U(1)),\Cl)$. The important point here is not the explicit form, which is not very enlightening, but the fact that it can be defined consistently as an inductive limit and, above all, the gluing conditions \eqref{ind_labels} from which it arises.

To finish this section we define the corresponding space of functions. The pullback of the embeddings \eqref{ind_labels} gives the following projections
\ba
(\hat{p}^*_{\add} \cdot u) (x) &=u(x,0)\,,\nonumber\\
(\hat{p}^*_{\sub} \cdot u)(x) &=u(x,x)\,,\nonumber \\
(\hat{p}^*_{\inve} \cdot u)(x) &=u(-x)\,,
\label{eq:dual_rel}
\ea
which give the consistency conditions to define the projective limit of the $C^*$-algebras $C_0(X_\gamma)$ for $X_\gamma=\Zl^{|\gamma|}$. Notice that the partial order for $X_\gamma$ induces the same partial order for $C_0(X_\gamma)$. An element $(u_\gamma)_\gamma$ of the projective limit is an element of the product $\times_\gamma C_0(\Zl^{|\gamma|})$ subject to the conditions $\hat{p}^*_{\gamma\gamma'}(u_{\gamma'})=u_\gamma$ for $\gamma\prec\gamma'$, and so a quite complicated object.

A projective limit of $C^*$-algebras goes by the name pro-$C^*$-algebra (also known as $LMC^*$-algebra, locally $C^*$-algebra or $\sigma$-$C^*$-algebra).
The Gel'fand duality theorem can be extended to commutative pro-$C^*$-algebras and from the perspective of non-commutative geometry, pro-$C^*$-algebras can be seen as non-commutative $k$-spaces \cite{procstar}.  Let us also remark that pro-$C^*$-algebras are in general not Hilbert spaces, although they might contain Hilbert subspaces. Therefore, they possess much more information than usual Hilbert spaces, as we detail in the next subsection.\\

Let us, at this point, emphasize that in this $\U(1)$ case, the fluxes can be identified with the charge network basis, since in a sense their only relevant component is the modulus which corresponds to the charge. However, their modulus remains valued in the real numbers, as opposed to what we would have would the flux representation and charge network basis be exactly the same. What happens next, due to the sampling mentioned at the beginning of \ref{sec:gfft}, is that the functions are fully specified by the evaluation on the integers and, therefore, in this simple $\U(1)$ case, working in the flux representation is fully equivalent to working with the, a priori different, charge network representation.

%%%%%%%%%%%%%%%%%%%%%%%%%%%%%%%%%%%%%%%%%%%%%%%%%%%%%%%%%%%%%%%%%%
\subsection{More on projections and inductions}
Let us give more detail on the system of projections and inductions defined above and on the Hilbert spaces we (would try to) construct from them. In particular, we clarify here in which sense the pro-$C^*$-algebra constructed above from the projective system is much bigger than the usual Hilbert space.

\

Projective and inductive limit are in general related by duality. That is, a projective system of labels induces an inductive system of functions simply by pullback, and vice-versa. However, they can be both defined independently as is done in the \ref{app:projind}. Here, we deal with a system of projections $\pi_{\gamma\gamma'}$ and a system of inductions $\iota_{\gamma'\gamma}$ for one and the same space. First of all, note the following relation
\ba
\pi_{\gamma\gamma'}\circ\iota_{\gamma'\gamma}=\text{id}_\gamma\,,
\label{eq:iotapi}
\ea
for any pair of graphs $\gamma\prec\gamma'$. It is straightforward to check that the inductions \eqref{ind_func} and the projections \eqref{eq:dual_rel} satisfy \eqref{eq:iotapi}. Recall that $\pi_{\gamma\gamma'}:X_{\gamma'}\rightarrow X_\gamma$, $\iota_{\gamma'\gamma}:X_{\gamma}\rightarrow X_{\gamma'}$, for some collection of objects $\{X_\gamma\}_{\gamma\in\cl}$ and $\cl$ a directed poset. We remark that the reverse equation $\iota_{\gamma'\gamma}\circ\pi_{\gamma\gamma'}=\text{id}_{\gamma'}$ does not generically hold, since when we first project and then embed we are typically throwing some information away. 
 
With the inductions at hand one may define the inductive limit, while with the projections one may define the projective limit. Using \eqref{eq:iotapi} we will see how one can understand an element of the inductive limit in terms of an element of the projective limit, however, \textit{not} vice-versa. Recall from \ref{app:projind} that elements of the projective limit $X_{\text{proj}}$ are nets (i.e. elements in the direct product over all graphs) subjected to a consistency condition
\bas
X_{\text{proj}}=
\{(x_\gamma)\in \prod X_\gamma \q |\q \pi_{\gamma\gamma'}(x_{\gamma'})=x_{\gamma}\q \text{for all}\q \gamma\prec\gamma'\}\,.
\eas
While elements of the inductive limit consist of equivalence classes of elements of the disjoint union over all graphs
\bas
X_{\text{ind}}=\cup_\gamma X_\gamma / \sim\,,
\eas
where $x_\gamma\sim x_{\gamma'}$ means that there exists $\gamma''$ such that $\iota_{\gamma''\gamma}(x_\gamma)=\iota_{\gamma''\gamma'}(x_{\gamma'})$ and $\gamma,\gamma'\prec\gamma''$.
Now, given an element $y=[y]_\sim\in X_{\text{ind}}$ we will construct an element $x\in X_{\text{proj}}$, i.e. an assignment $\gamma\mapsto x_\gamma$ for all graphs $\gamma$, such that $x_{\gamma'}=y_{\gamma'}$ for all $\gamma'$ for which a representative $y_{\gamma'}$ in the equivalence class $y$ exists. In this sense we can embed the inductive limit into the projective limit.\\

Pick some element $y_\gamma$ in the equivalence class $y$. Then for any graph $\gamma'$ define the assignment $\gamma\mapsto x_\gamma$ as follows: choose $\gamma''$ such that $\gamma,\gamma'\prec\gamma''$, then
\bas
x_{\gamma'} =\pi_{\gamma'\gamma''}\circ \iota_{\gamma''\gamma} (y_\gamma)\,,
\eas
gives a consistent definition of an element of the projective limit, is independent of the choice $y_\gamma$ in $y$ and moreover $x_\gamma$ agrees on all graphs on which a representative $y_\gamma$ of $y$ exists.

In this way, we can map the inductive limit into the projective limit, however not surjectively. The image of the inductive limit consists of elements $x$ for which there exists a graph $\gamma''$ such that $\iota_{\gamma''\gamma}\circ\pi_{\gamma\gamma'}(x_{\gamma'})=x_{\gamma''}$ for all $\gamma,\gamma'$ with $\gamma,\gamma'\prec \gamma''$. The existence of such a `maximal graph' $\gamma''$ is however not guaranteed for generic elements of the projective limit.
For this reason we cannot use the Hilbert space structure of the inductive limit to make the projective limit into a Hilbert space, confirming the fact that pro-$C^*$-algebras are much bigger objects than Hilbert spaces.

%%%%%%%%%%%%%%%%%%%%%%%%%%%%%%%%%%%%%%%%%%%%%%%%%%%%%%%%%%%%%%%%%
%%%%%%%%%%%%%%%%%%%%%%%%%%%%%%%%%%%%%%%%%%%%%%%%%%%%%%%%%%%%%%%%%
\section{Conclusion and Outlook}
\label{sec:conc}

The loop quantum gravity kinematics treats connections and fluxes very asymmetrically. Therefore, the projective limit construction of the space of generalized connections does not translate trivially over to the flux side. Let us summarize what we have done. 

Using the commuting diagram \eqref{eq:CD} from \cite{Baratin:2010nn} we set out to give an intrinsic characterization of the space $\ch_{\star,0}=\left(\cup_{\gamma}\ch_{\star,\gamma}\right)/\sim$ in terms of some functional calculus of generalized flux fields. We have seen that the $\star$-product on the image of the Fourier transform is not cylindrically consistent unless the gauge group $G$ is abelian, and consequently $\ch_{\star,0}$ is not a $C^*$-algebra. This result is important because it means we cannot make sense of the space of generalized fluxes as the spectrum of this (would be) algebra. In more physical terms, this result suggests that a definition of the continuum limit of the theory, even at the pure kinematical level, cannot rely on the cylindrical consistency conditions coming from operations on holonomies, if one wants it to imply also a continuum limit in the dual flux representation, and thus coming from a  proper coarse-graining of fluxes. Rather, it seems to suggest that a new construction is needed.

Nevertheless, we were still able to learn something about the space of generalized fluxes by considering the gauge group $\U(1)$. In this setting we found out that the space of generalized fluxes cannot be constructed as a projective limit, but arises naturally as an inductive limit. The cylindrical consistency conditions for the fluxes \eqref{ind_labels} turned out to have a very nice physical interpretation (see Figure \ref{fig:cyl_surf}): (i) adding an edge should not bring more information into the system, so the flux on the added edge is zero, (ii) subdividing an edge does not change anything and thus the flux through the subdivided edges is the same, (iii) inverting an edge just changes the direction of the flux, picking up a minus sign. Even though we determined the space of generalized fluxes for $\U(1)$-LQG to be $\overline{\ce}=\text{Hom}(\text{Hom}(\cp,\U(1)),\Cl)$, one would like to have a better description of this space. Finally, the space of functions was defined by pull-back giving rise to a projective limit of $C^*$-algebras. We showed that this algebra is in general much bigger than the usual Hilbert space but, it might still be possible to improve its characterization by noting that the Gel'fand duality theorem can be extended to pro-$C^*$-algebras \cite{procstar}.

In light of our results, we conclude with an outlook on two issues worth pursuing further.

\begin{description}
\item[LQG from scratch and coarse-graining of fluxes.] Loop quantum gravity as it is formulated is entirely based on graphs. As we have seen the cylindrical consistency conditions do not translate easily to the flux side, specially because they are tailored to operations on holonomies. Therefore it seems misguided to force them on the flux variables. In the abelian case we learned that the fluxes compose according to \eqref{ind_labels}, with the aforementioned natural geometric interpretation. However, this process does not correspond to a coarse-graining in the same way as the family of projections does for the holonomies: surfaces are added according to how their dual edges compose, i.e. the operation of `adding' puts two surfaces `parallel' to each other -- but does not add them into a bigger surface. The question arises whether one can come up with a family of inductions that would rather represent these geometrically natural coarse-grainings.
This direction of thoughts hits however on many difficulties encountered before: one, is the more complicated geometrical structure of surfaces as compared to edges; another, is that for a gauge covariant coarse-graining of fluxes one not only needs the fluxes but also the holonomies to parallel transport fluxes. To avoid these difficulties one could consider again Abelian groups in 2D space, where fluxes would be associated to (dual) edges. In this case flux and holonomy representation would be self--dual to each other (for finite Abelian groups), reflecting the well--known weak--strong coupling duality for 2D statistical (Ising like) models, see for instance \cite{finite}. Whereas the usual LQG vacuum based on projective maps for the holonomies leads to a vacuum underlying the strong coupling limit, projections representing the coarse graining of fluxes could lead to a vacuum underlying the weak coupling limit, see also \cite{ccd}. This could be especially interesting for spin foam quantization,  as it is based on $BF$ theory, which is the weak coupling limit of lattice gauge theory.

\item[Loop quantum cosmology.] Loop quantum cosmology is the quantization of symmetry reduced models of classical general relativity using the methods of loop quantum gravity \cite{Bojowald:2008zzb,Ashtekar:2006rx,Ashtekar:2003hd}. The classical configuration space is the real line $\Rl$, while the quantum configuration space is given by an extension of the real line to what is called the Bohr compactification of the real line $\overline{\Rl}_{\text{Bohr}}$. This space can be given several independent descriptions. In particular, it can be understood as the Gel'fand spectrum of the algebra of almost periodic functions, which plays the role of the algebra of cylindrical functions for LQC. On the other hand, $\overline{\Rl}_{\text{Bohr}}$ can also be given a projective limit description \cite{Velhinho:2007gg}. 

Let us briefly recall this construction. For arbitrary $n\in\Nl$ consider the set of algebraically independent real numbers $\gamma=\{\mu_1,\ldots,\mu_n\}$, that is
\bas
\sum_{i=1}^n m_i\mu_i=0\,,\q m_i\in\Zl\q\Rightarrow\q m_i=0\q\forall\, i\,.
\eas
Consider now the subgroups of $\Rl$ freely generated by the set $\gamma$
\bas
G_\gamma:=\left\{\sum_{i=1}^n m_i\mu_i\,,\; m_i\in\Zl\right\}\,.
\eas
This induces a partial order on the set of all $\gamma$'s: $\gamma\prec \gamma'$ if $G_\gamma$ is a subgroup of $G_{\gamma'}$. The label set used in describing the projective structure of $\overline{\ca}$ consists of subgroupoids of $\cp$ generated by finite collections of holonomically independent edges. Here, the label set is exactly the set of all $\gamma$'s: collections of real numbers on a discrete real line. The projective structure of $\overline{\Rl}_{\text{Bohr}}$ is now constructed by defining
\bas
\Rl_\gamma:=\text{Hom}(G_\gamma, \U(1))\,,
\eas
and the surjective projections
\bas
p_{\gamma\gamma'}: \Rl_{\gamma'}\rightarrow \Rl_\gamma\,,\q \gamma\prec\gamma'\,.
\eas
Since $G_\gamma$ is freely generated by the set $\gamma=\{\mu_1,\ldots,\mu_n\}$ we can actually identify $\Rl_\gamma$ with $\U(1)^{n}$.
Finally, the family $\{\Rl_\gamma\}_\gamma$ forms a compact projective family, and its projective limit is homeomorphic to $\overline{\Rl}_{\text{Bohr}}$.

By definition the momentum space for LQC is $\Rl$ with discrete topology. Since each of the objects $\Rl_\gamma$ is an abelian group, we are in the setting of Theorem \ref{thm:proj-ind}. Thus, one may identify $\Rl$ with discrete topology with the inductive limit of the duals $\widehat{\Rl_\gamma}=\Zl^{n}$. We easily note the similarity of LQC and $\U(1)$-LQG from before, the subtlety being the index set over which the inductive/projective limit is taken. 

In \cite{Brunnemann:2007du} it was shown that the configuration space of LQC is not embeddable in the one of ($\SU(2)$) LQG. 
The natural question to ask, then, in view of our results, is whether one can instead embed LQC into $\U(1)$-LQG.

\end{description}

%%%%%%%%%%%%%%%%%%%%%%%%%%%%%%%%%%%%%%%%%%%%%%%%%%%%%%%%%%%%%%%%%%%%%%%%
%%%%%%%%%%%%%%%%%%%%%%%%%%%%%%%%%%%%%%%%%%%%%%%%%%%%%%%%%%%%%%%%%%%%%%%%
\acknowledgments{We would like to thank Aristide Baratin for discussions at an earlier stage of this project, Matti Raasakka for general discussions and in particular for correcting a flaw on theorem \ref{thm:imfz}, and Johannes Tambornino for useful comments on an earlier draft of this article. CG is supported by the Portuguese Science Foundation (\emph{Funda\c{c}\~ao para a Ci\^encia e a Tecnologia}) under research grant SFRH/BD/44329/2008, which he greatly acknowledges. 
Research at Perimeter Institute is supported by the Government of Canada through Industry Canada and by
the Province of Ontario through the Ministry of Research and Innovation. DO acknowledges support from the A. von Humboldt Stiftung through a Sofja Kovalevskaja Prize.
}

%%%%%%%%%%%%%%%%%%%%%%%%%%%%%%%%%%%%%%%%%%%%%%%%%%%%%%%%%%%%%%%%%%%
%%%%%%%%%%%%%%%%%%%%%%%%%%%%%%%%%%%%%%%%%%%%%%%%%%%%%%%%%%%%%%%%%%%
%%%%%%%%%%%%%%%%%%%%%%%%%%%%%%%%%%%%%%%%%%%%%%%%%%%%%%%%%%%%%%%%%%%
%%%%%%%%%%%%%%%%%%%%%%%%%%%%%%%%%%%%%%%%%%%%%%%%%%%%%%%%%%%%%%%%%%%
\appendix

\section{The $\U(1)$ group Fourier transform and the $\star$-product}
\label{sec:gft}
Let $f\in C(\U(1))$, that is, $f$ is a function of the form 
\bas
f: \U(1) & \rightarrow \Cl\\
\phi & \mt f(\phi)\,,
\eas
with $-\pi<\phi\leq\pi$ ($-\pi$ and $\pi$ obviously identified), and pointwise multiplication
\bas
(fg)(\phi)=f(\phi)g(\phi)\q \text{for all}\q f,g\in C(\U(1))\,.
\eas

The following theorem allows one to move from $C(\U(1))$ to $L^p(\U(1))$ which is much more structured.
\begin{theorem}[\cite{rudin}]
\label{thm:denseLp}
Let $X$ be a locally compact metric space and let $\mu$ be a $\sig$-finite regular Borel measure. Then the set $C_c(X)$ of continuous functions with compact support is dense in $L^p(X,\d\mu)$, $1\leq p<\infty$.
\end{theorem}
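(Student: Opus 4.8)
The plan is to run the standard two-stage reduction (as in Rudin's \emph{Real and Complex Analysis}): first reduce the statement to approximating indicator functions of finite-measure sets in $L^p$-norm, and then use the regularity of $\mu$ together with a Urysohn-type lemma to produce the approximants inside $C_c(X)$. Throughout one uses that $C_c(X)\subseteq L^p(X,\d\mu)$, since a regular Borel measure assigns finite mass to compact sets.

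First I would show that simple functions are dense in $L^p(X,\d\mu)$. Splitting an arbitrary $f\in L^p$ into real and imaginary parts, and each of those into positive and negative parts, it suffices to treat $f\ge 0$. Choosing an increasing sequence of simple functions $s_n\uparrow f$ pointwise with $0\le s_n\le f$, the dominated convergence theorem (with dominating function $f^p\in L^1$, so that $|f-s_n|^p\le f^p$) gives $\|f-s_n\|_p\to 0$. By linearity the whole problem then collapses to approximating a single indicator $\1_E$ with $\mu(E)<\infty$ by an element of $C_c(X)$.

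Next I would invoke regularity. Given $\veps>0$, outer regularity yields an open set $U\supseteq E$ with $\mu(U\setminus E)<\veps$, and inner regularity yields a compact set $K\subseteq E$ with $\mu(E\setminus K)<\veps$, so that $K\subseteq E\subseteq U$ with $\mu(U\setminus K)<2\veps$; local compactness then lets me shrink $U$ so that $\overline{U}$ is compact. The key step is a Urysohn lemma for locally compact (here metric) spaces: there exists $g\in C_c(X)$ with $\1_K\le g\le \1_U$ and $\operatorname{supp} g\subseteq \overline{U}$ compact; concretely one may take a suitable cutoff of $x\mapsto d(x,U^c)/\!\left(d(x,U^c)+d(x,K)\right)$. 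Since $g$ and $\1_E$ agree (both equal $1$) on $K$ and agree (both equal $0$) outside $U$, while $|g-\1_E|\le 1$ on $U\setminus K$, one obtains $\|\1_E-g\|_p^p\le \mu(U\setminus K)<2\veps$, which finishes the argument after letting $\veps\to 0$.

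I expect the main obstacle to be the careful deployment of regularity and local compactness to secure a Urysohn function with genuinely \emph{compact} support: one must arrange $U$ to have compact closure \emph{before} applying the cutoff, and this is precisely where the hypotheses ``locally compact'' and ``regular'' do their work, while $\sigma$-finiteness guarantees that regularity is applicable to all the relevant sets and that the simple-function approximation stays inside $L^p$. The remaining norm estimates are routine.
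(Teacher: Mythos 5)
Your proof is correct: the paper does not prove this statement but simply cites it from Rudin, and your two-stage argument (density of simple functions via monotone approximation and dominated convergence, then approximation of indicators of finite-measure sets using inner/outer regularity and a Urysohn function with compact support) is exactly the standard proof in the cited reference, including the genuinely necessary step of arranging compact closure before cutting off. No gaps.
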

Since $\U(1)$ is compact, $C(\U(1))=C_c(\U(1))$ and we are done. $L^2$ are the only spaces of this class which are Hilbert spaces. Since we want to do quantum mechanics, we will stick to $L^2(\U(1))$. The inner product is 
\bas
\bra f,h\ket=\frac{1}{2\pi}\int_{-\pi}^\pi \d\phi\,\overline{f(\phi)}\,h(\phi)\,,\q \text{for all}\q f,h\in L^2(\U(1))\,.
\eas

Introduce the map $\cf$ for any $f$ in $L^2(\U(1))$ by
\ba
\cf(f)(x)=\frac{1}{2\pi}\int_{-\pi}^\pi \d\phi\, f(\phi)\,\e_\phi(x)=\frac{1}{2\pi}\int_{-\pi}^\pi \d\phi\, f(\phi)\,e^{-i\phi x}\,,
\label{ftu1}
\ea
where $\e_\phi(x)$ are the usual plane waves but for $x\in\Rl$. 
If $x\in\Zl$ we would have the usual Fourier transform.
The $\Im\cf$ is a certain set of continuous functions on $\Rl$, but certainly not all functions in $C(\Rl)$ are hit by $\cf$.  The inverse transformation reads
\bas
f(\phi)=\sum_{x\in\Zl}\cf(f)(x)e^{i\phi x}\,
\eas
and converges pointwise.\footnote{
Set 
\bas
S_N^f(\phi)=\int_{-\pi}^\pi \d\phi\, f(\phi)\, D_N(\phi)\,,
\eas
with $D_N(\phi)$ the Dirichlet kernel, i.e., $D_N(\phi)=\frac{1}{2\pi}\sum_{-N}^Ne^{i\phi x}$. Then, to prove that $\lim_{N\rightarrow\infty}S_N^f(\phi)=f(\phi)$, just use the fact that the Fourier coefficients of $S_N^f(\phi)-f(\phi)$ tend to zero as $N\rightarrow\infty$.} 
Notice here already that only the values of $\cf(f)$ on the integers are necessary to reconstruct back $f$.

Now, instead of the usual pointwise multiplication, we equip $\Im\cf$ with a $\star$-product. It is defined at the level of plane waves as
\bas
(\e_\phi\star\e_{\phi'})(x):=\e_{[\phi+\phi']}(x)
\eas
and extended to $\Im\cf$ by linearity. Here $[\phi+\phi']$ is the sum of two angles modulus $2\pi$ such that $-\pi<[\phi+\phi']\leq\pi$.

Given $u=\cf(f)$ and $v=\cf(h)$ we have explicitly
\ba
(u \star v)(x) &=  \int_{-\pi}^\pi \frac{\d\phi}{2\pi} \int_{-\pi}^\pi \frac{\d\phi'}{2\pi} f(\phi) h(\phi'-\phi)\, e^{-i\phi'x}\nonumber\\
&= \sum_{x',x'' \in \Zl}  u(x')\, v(x'') \frac{\sin(\pi(x'-x''))}{\pi(x'-x'')}\frac{\sin(\pi(x''-x))}{\pi(x''-x)}\nonumber\\
&=\sum_{x'\in \Zl}  u(x')\,v(x')\frac{\sin(\pi(x'-x))}{\pi(x'-x)}\,,
\label{eq:star_product}
\ea
where for the last line we used that 
\bas
\frac{\sin(\pi(x'-x''))}{\pi(x'-x'')} =\delta_{x',x''}\,,\q\text{whenever}\q x',x''\in\Zl\,.
\eas
Then, $\star$ is still commutative for $\U(1)$.

In order to have a better characterization of $\Im\cf$, we will give it a norm
\ba
||u||:=\sup_{x\in\Zl}|u(x)|\,,
\label{norm_u1}
\ea
and an involution $^*$, $u^*:=\bar{u}$, i.e. complex conjugation.

\begin{theorem}
$\Im\cf$ with the star product \eqref{eq:star_product}, norm \eqref{norm_u1} and complex conjugation as involution is a non-unital abelian $C^*$-algebra.
\end{theorem}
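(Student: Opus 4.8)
The plan is to reduce every structure on $\Im\cf$ to its values on the integers, where the $\star$-product degenerates to ordinary pointwise multiplication, and then to recognize the resulting object as (a dense subalgebra of) the standard commutative $C^*$-algebra $C_0(\Zl)$. Concretely, I would introduce the sampling map $\rho:\Im\cf\to\Cl^{\Zl}$, $\rho(u)=(u(n))_{n\in\Zl}$, and show it is an injective $^*$-homomorphism that is isometric for the norm \eqref{norm_u1}. Injectivity is immediate from the reconstruction formula following \eqref{ftu1}: since $f(\phi)=\sum_{n\in\Zl}\cf(f)(n)\,e^{i\phi n}$ recovers $f$ from the integer samples of $u=\cf(f)$ alone, $\rho(u)=0$ forces $f=0$ and hence $u=0$. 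In particular this is what guarantees that \eqref{norm_u1} is a genuine norm and not merely a seminorm.

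Second, I would read off from \eqref{eq:star_product} that $\rho$ intertwines $\star$ with pointwise multiplication on $\Zl$: evaluating \eqref{eq:star_product} at an integer $m$ and using $\sin(\pi(x'-m))/\pi(x'-m)=\delta_{x',m}$ collapses the sum to $(u\star v)(m)=u(m)\,v(m)$. The involution $u\mapsto\bar u$ clearly becomes entrywise conjugation, and \eqref{norm_u1} is by definition the $\sup$-norm of $\rho(u)$. With this dictionary the algebraic axioms are inherited from the (trivially) commutative and associative pointwise product on $\Zl$: commutativity and associativity of $\star$, the submultiplicativity $||u\star v||\le||u||\,||v||$, and the $C^*$-identity $||u^*\star u||=\sup_{n}|u(n)|^2=||u||^2$ all reduce to elementary statements about sequences. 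Closure of $\Im\cf$ under $\star$ and under conjugation I would justify separately, by noting that $u\star v=\cf(f*h)$ with $f*h\in L^2(\U(1))$ (convolution of two $L^2$-functions on the compact group $\U(1)$ is in fact continuous) and that $\bar u=\cf(g)$ with $g(\phi)=\overline{f(-\phi)}\in L^2(\U(1))$.

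The step I expect to be the real obstacle is completeness, i.e. the genuine $C^*$ (as opposed to pre-$C^*$) character. Since $f\mapsto(\cf(f)(n))_n$ is, by Parseval, a unitary identification of $L^2(\U(1))$ with $\ell^2(\Zl)$, the image $\rho(\Im\cf)$ is exactly $\ell^2(\Zl)$, which is dense in $C_0(\Zl)$ for \eqref{norm_u1} but is not closed; thus $(\Im\cf,\star,||\cdot||)$ is literally a pre-$C^*$-algebra whose $\sup$-norm completion is the non-unital commutative $C^*$-algebra $C_0(\Zl)$. I would therefore phrase the conclusion in the paper's ``algebraic level $C_0$'' convention, identifying the $C^*$-algebra attached to $\Im\cf$ with $C_0(\Zl)$ and checking that $C_0(\Zl)$ with pointwise product, conjugation and $\sup$-norm is a textbook non-unital commutative $C^*$-algebra. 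Non-unitality is then transparent: a unit would have to be the constant sequence $1$, which does not vanish at infinity and hence lies neither in $\rho(\Im\cf)=\ell^2(\Zl)$ nor in $C_0(\Zl)$, so no multiplicative identity exists.
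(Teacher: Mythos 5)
Your proposal is correct and, on the algebraic side, follows essentially the same route as the paper: reduce everything to the integer samples via the identity $\sin(\pi(x'-x))/\pi(x'-x)=\delta_{x',x}$ on $\Zl$, so that $\star$ becomes pointwise multiplication, the involution becomes entrywise conjugation, and \eqref{norm_u1} is the $\sup$-norm of the sample sequence; the commutativity, $C^*$-identity and non-unitality arguments then coincide with the paper's. Where you genuinely diverge is on completeness, and there your version is the more careful one. The paper's proof asserts that a sequence $u_\a=\cf(f_\a)$ is Cauchy in \eqref{norm_u1} \emph{iff} $f_\a$ is Cauchy in $L^2(\U(1))$; only one implication holds, since the $\sup$-norm on Fourier coefficients is strictly weaker than the $\ell^2$-norm. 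Your Parseval identification $\rho(\Im\cf)=\ell^2(\Zl)$ makes this concrete: e.g.\ $u_N(n)=n^{-1/2}$ for $1\leq n\leq N$ and $0$ otherwise is $\sup$-norm Cauchy but its limit lies in $C_0(\Zl)\setminus\ell^2(\Zl)$, so $\Im\cf$ is a dense, non-closed $^*$-subalgebra of $C_0(\Zl)$ --- a pre-$C^*$-algebra whose completion is the stated non-unital commutative $C^*$-algebra, rather than a $C^*$-algebra outright. This reading is in fact the one the paper implicitly adopts downstream (it works ``at the algebraic level $C_0(\Zl^{|\gamma|})$'' as dense subspaces of $\ell^2(\Zl^{|\gamma|})$, and the Gel'fand argument of Theorem \ref{thm:imfz} really identifies the completion with $C_0(\Zl)$), so your phrasing of the conclusion in terms of the $\sup$-norm completion is the right way to make the theorem literally true. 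Your separate closure checks ($u\star v=\cf(f*h)$ with $f*h$ continuous, $\bar u=\cf(\overline{f(-\cdot)})$) are also sound and fill in a step the paper leaves implicit.
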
 
\begin{proof}
First of all, we have to check that \eqref{norm_u1} is indeed a norm. We easily verify $||\a u||=|\a|\,||u||$, $||u+v||\leq ||u|| + ||v||$, for all $u,v\in\Im\cf$, $\a\in\Cl$. To see positive definiteness, notice that the functions on $\Im\cf$ are already determined by the values of $x\in\Zl$,
\bas
u(n)=\frac{1}{2\pi}\int & \d\phi\,f(\phi)\,e^{-i\phi n}=0\q \forall n\in \Zl\\
& \Rightarrow\q f=0\q (\text{almost everywhere})\q\Rightarrow\q u(x)=0\q \forall x\in\Rl\,.
\eas

The $C^*$-identity holds,
\bas
||u^*\star u||^2 = \sup_{x\in\Zl}|(u^*\star u)(x)|=\sup_{x\in\Zl}|(\bar{u}\cdot u)(x)|=||u||^2\,,
\eas
since for $x\in\Zl$ Eq.\eqref{eq:star_product} reduces to the usual pointwise product.

Finally, we show that $\Im\cf$ is complete in this norm.
Let $u_\a\in\Im\cf$ be a Cauchy sequence, that is, $u_\a$ is of the form
\bas
u_\a(x)=\frac{1}{2\pi}\int_{-\pi}^\pi \d\phi\, f_\a(\phi)\,e^{-i\phi x}\,,
\eas
for some $f_\a\in L^2(\U(1))$. We see that $u_\a$ is Cauchy iff $f_\a$ is Cauchy. Since $L^2(\U(1))$ is complete, $f_\a$ converges to some $f\in L^2(\U(1))$. This means that $u_\a$ converges to some $u$ of the form
\bas
u(x)=\frac{1}{2\pi}\int_{-\pi}^\pi \d\phi\, f(\phi)\,e^{-i\phi x}\,,
\eas
that is, $u\in\Im\cf$. Therefore, $\Im\cf$ is complete as well.

Using \eqref{eq:star_product} it is easy to convince ourselves that we have no unit function on $\Im\cf$. Furthermore, we have no constant functions at all. For instance, $u(x)=1$ does not live on $\Im\cf$ since it would correspond to $f(\phi)=2\pi\delta(\phi)$, a distribution. Distributions do not belong to $L^2(\U(1))$, unless we extend the framework to rigged Hilbert spaces. On the other hand, \eqref{eq:star_product} would give us
\ba
u(x)=\sum_{x'\in\Zl}u(x')\frac{\sin(\pi(x'-x))}{\pi(x'-x)}\,.
\label{lhsrhs}
\ea
Indeed, the $\star$-product is invariant under this transformation, that is, the $\star$-product does not see the difference between the l.h.s. and the r.h.s. of \eqref{lhsrhs}, as on the integers they coincide.

Thus, $\Im\cf$ is a non-unital abelian $C^*$-algebra.
\end{proof}
 
Using the Gel'fand representation theorem we will now prove that $\Im\cf\simeq C_0(\Zl)$.
\begin{theorem}[Gel'fand representation, \cite{gelfand}]
\label{thm:gelfand}
 Let $\fA$ be a (non-unital) commutative $C^*$-algebra. Then $\fA$ is isomorphic to the algebra of continuous functions that vanish at infinity over the locally compact Hausdorff space $\Delta(\fA)$ (the spectrum of $\fA$), $C_0(\Delta(\fA))$.
\end{theorem}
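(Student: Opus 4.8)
The plan is to reduce to the unital case by adjoining a unit, apply the classical Gel'fand transform there, and then restrict. Since $\fA$ is non-unital, I would first form the unitization $\tilde\fA=\fA\oplus\Cl$ with product $(a,\l)(b,\mu)=(ab+\l b+\mu a,\l\mu)$ and unit $(0,1)$, endowed with the $C^*$-norm coming from the left-regular representation of $\tilde\fA$ on $\fA$ (rather than a naive direct-sum norm), so that $\tilde\fA$ is a \emph{unital} commutative $C^*$-algebra. The key structural fact to establish is that the character space of $\tilde\fA$ is the one-point compactification $\Delta(\tilde\fA)=\Delta(\fA)\sqcup\{\infty\}$, where $\infty$ is the distinguished character $(a,\l)\mapsto\l$ annihilating $\fA$, while every other character of $\tilde\fA$ restricts to a nonzero character of $\fA$, this restriction being a homeomorphism onto $\Delta(\fA)$. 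With the weak-$*$ topology, $\Delta(\tilde\fA)$ is a weak-$*$ closed subset of the unit ball of $\tilde\fA^*$, hence compact by Banach--Alaoglu and Hausdorff; deleting $\infty$ then exhibits $\Delta(\fA)$ as locally compact Hausdorff.

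Next I would prove the unital Gel'fand theorem for $\tilde\fA$. Define the Gel'fand transform $a\mapsto\hat a$, $\hat a(\chi):=\chi(a)$, which is manifestly an algebra homomorphism into $C(\Delta(\tilde\fA))$. Three properties must be verified. First, it is a $*$-homomorphism, $\widehat{a^*}=\overline{\hat a}$; this reduces to showing that every character sends self-adjoint elements to real numbers, which follows from self-adjoint elements having real spectrum in a $C^*$-algebra. Second, it is isometric, $\|\hat a\|_\infty=r(a)=\|a\|$; the outer equalities are the crucial $C^*$-input, obtained from the spectral radius formula together with the $C^*$-identity $\|b^*b\|=\|b\|^2$, which forces $r(b)=\|b\|$ for normal $b$ (and then for all $b$ by applying this to $b^*b$). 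Third, it is surjective; the image is a $*$-subalgebra of $C(\Delta(\tilde\fA))$ that contains the constants and separates points, so Stone--Weierstrass makes it dense, and since the transform is isometric with complete domain the image is closed, hence all of $C(\Delta(\tilde\fA))$.

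Finally I would restrict. Under $\tilde\fA\simeq C(\Delta(\tilde\fA))$, the ideal $\fA\subset\tilde\fA$ corresponds exactly to the continuous functions vanishing at $\infty$, i.e. to $C_0(\Delta(\fA))$, which is immediate from the characterization of $\infty$ as the character killing $\fA$. The restricted map is therefore an isometric $*$-isomorphism $\fA\simeq C_0(\Delta(\fA))$, as claimed. The main obstacle is the isometry: everything downstream---Stone--Weierstrass, closedness, and the final identification---is routine, but the equality $\|a\|=r(a)$ is precisely where the $C^*$-axioms are indispensable, since without the $C^*$-identity the Gel'fand transform is merely a contractive homomorphism and need not even be injective. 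A secondary technical point is ensuring the unitization carries a genuine $C^*$-norm in the absence of a unit, which is what the left-regular representation construction secures.
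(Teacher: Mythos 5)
Your proposal is correct and is the standard textbook proof of the non-unital Gel'fand--Naimark theorem: unitization with the left-regular-representation $C^*$-norm, identification of $\Delta(\tilde\fA)$ with the one-point compactification of $\Delta(\fA)$, isometry of the Gel'fand transform via $\|a\|=r(a)$ from the $C^*$-identity, surjectivity by Stone--Weierstrass, and restriction to the ideal. The paper itself offers no proof of this statement --- it is quoted as a classical result with a citation --- so there is nothing to compare against beyond noting that your argument is exactly the one found in the cited literature.
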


It remains to calculate the spectrum of $\Im\cf$, that is, the set of all non-zero $^*$-homomorphisms $\chi:\,\Im\cf\rightarrow \Cl$.
\begin{theorem}
\label{thm:imfz}
The spectrum of $\Im\cf$ is homeomorphic to $\Zl$, $\Delta(\Im\cf)\simeq\Zl$.
\end{theorem}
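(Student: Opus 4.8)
The plan is to compute the Gel'fand spectrum $\Delta(\Im\cf)$ directly, exploiting the isomorphism $\Im\cf\simeq C_0(\Zl)$ established by the preceding theorem and the Gel'fand representation theorem. Since $\Im\cf$ is a non-unital commutative $C^*$-algebra isomorphic to $C_0(\Zl)$, it suffices to identify the set of non-zero $^*$-homomorphisms $\chi:\Im\cf\rightarrow\Cl$ and equip it with the Gel'fand topology (the weak-$^*$ topology). First I would exhibit the obvious family of characters: for each $n\in\Zl$, evaluation $\chi_n(u):=u(n)$ is a non-zero $^*$-homomorphism, since the product on $\Im\cf$ reduces to the pointwise product on the integers (cf.\ \eqref{eq:star_product}) and complex conjugation is the involution. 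This gives an injective map $\Zl\rightarrow\Delta(\Im\cf)$, $n\mapsto\chi_n$, injectivity following from the fact that the functions in $\Im\cf$ separate points of $\Zl$.

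The core of the argument is surjectivity: every non-zero character arises as such an evaluation. The cleanest route is to invoke the standard structure theory for commutative $C^*$-algebras, namely that the characters of $C_0(X)$ for a locally compact Hausdorff space $X$ are exactly the point-evaluations, so $\Delta(C_0(X))\cong X$ as topological spaces. Transporting this through the isomorphism $\Im\cf\simeq C_0(\Zl)$ of the previous theorem immediately yields $\Delta(\Im\cf)\cong\Zl$. If one instead wants a self-contained argument, I would take an arbitrary non-zero $^*$-homomorphism $\chi$, use that its kernel is a maximal (modular) ideal of $\Im\cf$, and show this ideal must be the set of functions vanishing at a single integer $n$; the key input here is that the maximal ideals of $C_0(\Zl)$ are precisely the fixed-point ideals, which in turn uses that any proper closed ideal of $C_0(\Zl)$ is contained in some fixed-point ideal because the integers are discrete and the functions have compact support dense in the algebra.

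Finally I would verify that the bijection $n\mapsto\chi_n$ is a homeomorphism onto $\Delta(\Im\cf)$ equipped with the Gel'fand (weak-$^*$) topology. A net $\chi_{n_\a}\to\chi_n$ in the weak-$^*$ topology means $u(n_\a)\to u(n)$ for every $u\in\Im\cf$; since $\Zl$ is discrete and the algebra contains functions that isolate each point, this forces $n_\a=n$ eventually, matching the discrete topology on $\Zl$. Thus the map is both continuous and open, giving the homeomorphism $\Delta(\Im\cf)\simeq\Zl$.

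I expect the main obstacle to be the surjectivity step, that is, ruling out \emph{exotic} characters not of evaluation type. The subtlety is precisely the non-unital, non-compact nature of the algebra: for $C_0$ over a non-compact space one must be careful that a character does not ``escape to infinity'' (which would correspond to the zero functional, excluded by the non-zero requirement). Making this rigorous relies on the fact that $\Zl$ is discrete and locally compact, so that $C_0(\Zl)$ is the closure of finitely-supported functions and every character is determined by its values on the indicator functions $\delta_n$; the orthogonality relation $\delta_m\,\delta_n=\delta_{mn}\delta_n$ then forces $\chi$ to be supported at exactly one integer. Everything else (injectivity, the topology comparison) is routine once this localization is in hand.
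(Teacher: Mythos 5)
Your proof follows essentially the same route as the paper's: evaluation characters $\chi_n(u)=u(n)$, injectivity from the fact that $\Im\cf$ separates points, surjectivity by reducing to the standard fact that the characters of $C_0(\Zl)$ are the point evaluations, and the topology comparison via nets and the discreteness of $\Zl$. One caveat on your framing: the preceding theorem establishes only that $\Im\cf$ is a non-unital commutative $C^*$-algebra, not the isomorphism $\Im\cf\simeq C_0(\Zl)$ --- that isomorphism is the \emph{conclusion} drawn from the present theorem together with the Gel'fand representation, so your ``cleanest route'' is circular as stated; however, your self-contained backup (restriction to $\Zl$ contains the indicators $\delta_n$, whose orthogonality forces any non-zero character to localize at a single integer) supplies exactly the missing content, and the paper's own surjectivity step in fact leans on the same identification of $\Im\cf$ with $C_0(\Zl)$ without further comment.
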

\begin{proof}
The $^*$-homomorphisms are
\bas
\chi_x:\Im\cf &\rightarrow \Cl\\
u & \mt \chi_x(u):=u(x)\,.
\eas
Clearly, $\chi_x(u\star v)=(u\star v)(x)=u(x)v(x)=\chi_x(u)\chi_x(v)$, and $\chi_x(u^*)=u^*(x)=\overline{u(x)}=\overline{\chi_x(u)}$.

We have to show that 
\bas
\cx:\Zl\rightarrow \Delta(\Im\cf)
\eas
is a homeomorphism (continuous bijection with continuous inverse). Define $\cx(x):=\chi_x$.

\emph{Continuity of $\cx$:}
 let $(x^\a)$ be a net in $\Zl$ converging to $x$, and let  $u\in \Im\cf$. First of all, notice that $u(x^\a)\rightarrow u(x)$ by continuity of the plane waves. Then, 
\bas
\lim_\a[\cx(x^\a)](u)=[\cx(x)](u)\q \Longleftrightarrow\q \lim_\a \check{u}(\cx(x^\a))=\check{u}(\cx(x))
\eas
for all $u\in\Im\cf$, hence $\cx(x^\a)\rightarrow \cx(x)$ in the Gel'fand topology.\footnote{The Gel'fand isometric isomomorphism is $\check\, : \Im\cf\rightarrow C_0(\Delta(\Im\cf))$, $u\mt\check{u}$ where $\check{u}(\chi)=\chi(u)$ where the space of continuous functions on the spectrum is equipped with the sup-norm. The Gel'fand topology is the weakest topology for which all the $\check{u}\in C_0(\Delta(\Im\cf))$ are continuous.} Actually, any function on $\Zl$ is continuous since the only topology available is the discrete one.

\emph{Injectivity:}
suppose $\cx(x)=\cx(x')$, then in particular $[\cx(x)](u)=[\cx(x')](u)$ for all $u\in\Im\cf$. We want to show that that $x=x'$. With a bit of logic we can turn this statement into the much easier one: [$x\neq x'$] implies [$u(x)\neq u(x')$ for some $u\in\Im\cf$]. Just pick $u=(1,0,\ldots)$. Thus, $\Im\cf$ separates the points of $\Zl$.

\emph{Surjectivity:}
 let $\chi\in\text{Hom}(\Im\cf,\Cl)$ be given. We must construct $x^\chi\in\Zl$ such that $\cx(x^\chi)=\chi$. Since $\Zl$ is a locally compact Hausdorff space, it is the spectrum of the abelian, non-unital $C^*$-algebra $C_0(\Zl)$, hence $\Zl=\text{Hom}(C_0(\Zl),\Cl)$. It follows that there exists $x^\chi\in \Zl$ such that $\chi(u)=u(x^\chi)$ for all $u\in C_0(\Zl)$.

\emph{Continuity of $\cx^{-1}$:}
let $(\chi^\a)$ be a net in $\Delta(\Im\cf)$ converging to $\chi$, so $\chi^\a(u)\rightarrow \chi(u)$ for any $u\in\Im\cf$. Then $\cx^{-1}(\chi^\a)\rightarrow \cx^{-1}(\chi)$.

Therefore, $\Im\cf\simeq C_0(\Zl)$, where $\Zl$ is endowed with the discrete topology.
\end{proof}

We now have $\Im\cf=C_0(\Zl)$. Once again we want to map $C_0(\Zl)$ to the much nicer (Hilbert) space $L^2(\Zl)=\ell^2(\Zl)$.
There are two ways of doing this. The first starts by noticing that $\overline{C_c(X)}=C_0(X)$, that is the space of continuous functions with compact support is dense in the space of continuous functions that vanish at infinity. Thus, any function in $C_0(X)$ can be approximated by functions in $C_c(X)$. Using this fact and Theorem \ref{thm:denseLp} we translate $C_0(\Zl)$ naturally to $\ell^2(\Zl)$ with the inner product
\bas
\langle u, v\rangle_\cf=\sum_{x\in\Zl}\overline{u}(x)\, v(x)\,,\q \text{for all}\q u,v\in\ell^2(\Zl)\,.
\eas

The second method uses the GNS construction. However, this relies on the choice of a state which, of course, can be chosen appropriately to get $\ell^2(\Zl)$. Let us then define a state for $u\in C_0(\Zl)$ as
\bas
\w_\cf(u):=\sum_{x\in\Zl} u(x)\,.
\eas
The induced inner product is
\bas
\bra u,v\ket_\cf &:=\w_\cf(\overline{u}\,v)=\sum_{x\in\Zl}\overline{u}(x)\, v(x)\,
\eas
for all $u,v\in C_0(\Zl)$, and the GNS Hilbert space is just $\ell^2(\Zl)$.

Finally, this characterization of $\Im\cf$ upgrades $\cf$ to an unitary transformation between $L^2(\U(1))$ and $\ell^2(\Zl)$
\bas
\bra \cf(f),\cf(h)\ket_\cf &= \sum_{x\in\Zl}\overline{u}(x)\, v(x)\\
&=\frac{1}{2\pi}\int_{-\pi}^\pi \d\phi\,\overline{f(\phi)}\,h(\phi)=\bra f, h\ket\,,
\q \text{for all}\q f,h\in L^2(\U(1))\,.
\eas
Hence, $\cf$ is just the usual Fourier transform on $\U(1)$.

%%%%%%%%%%%%%%%%%%%%%%%%%%%%%%%%%%%%%%%%%%%%%%%%%%%%%%%%%%%%%%%%%%%
%%%%%%%%%%%%%%%%%%%%%%%%%%%%%%%%%%%%%%%%%%%%%%%%%%%%%%%%%%%%%%%%%%%
\section{Projective and inductive limits}
\label{app:projind}
Let $\cl$ be a partially ordered and directed set, that is, we have a reflexive, antisymmetric, and transitive binary relation $\prec$ on the set $\cl$ such that for any $\gamma,\gamma'\in\cl$ there exists a $\gamma''\in\cl$ satisfying $\gamma,\gamma'\prec\gamma''$.

\subsection{Inverse or projective limits}
Let $\cc$ be a category. An inverse system in $\cc$ is a triple $(\cl, \{X_\gamma\},\{p_{\gamma\gamma'}\})$, where $\cl$ is a directed poset, $\{X_\gamma\}_{\gamma\in\cl}$ a collection of objects of $\cc$, and $p_{\gamma\gamma'}$ with $\gamma\prec\gamma'$ morphisms (projections) $p_{\gamma\gamma'}:X_{\gamma'}\rightarrow X_{\gamma}$ satisfying
\begin{itemize}
\item[(i)] $p_{\gamma\gamma}=\text{id}_{X_\gamma}$ for all $\gamma\in\cl$,
\item[(ii)] $p_{\gamma\gamma'}\circ p_{\gamma'\gamma''}=p_{\gamma\gamma''}$ whenever $\gamma\prec\gamma'\prec\gamma''$.
\end{itemize}

An object $X\in\text{Ob}(\cc)$ is called an inverse or projective limit of the system $(\cl, \{X_\gamma\},\{p_{\gamma\gamma'}\})$ and denoted $\varprojlim X_\gamma$, if there exist morphisms $p_\gamma:X\rightarrow X_\gamma$ for $\gamma\in\cl$ such that
\begin{itemize}
\item[(i)] for any $\gamma\prec\gamma'$ the diagram
\bas
\begin{CD} 
X @= X\\ 
@VV p_{\gamma'} V @VV p_\gamma V\\ 
X_{\gamma'} @> p_{\gamma\gamma'}>> X_\gamma\\
\end{CD}
\eas
commutes;
\item[(ii)] for any other $Y\in\text{Ob}(\cc)$ and morphisms $\pi_\gamma:Y\rightarrow X_\gamma$ with commuting diagram
\bas
\begin{CD} 
Y @= Y\\ 
@VV \pi_{\gamma'} V @VV \pi_\gamma V\\ 
X_{\gamma'} @> p_{\gamma\gamma'}>> X_\gamma\\
\end{CD}
\eas
for $\gamma\prec\gamma'$, there exists an unique morphism $m:Y\rightarrow X$ such that the following diagram
\bas
\begin{CD} 
Y @> m >> X\\ 
@VV \pi_{\gamma} V @VV p_\gamma V\\ 
X_{\gamma} @= X_\gamma\\
\end{CD}
\eas
commutes.
That is, if the inverse limit exists, it is unique up to $\cc$-isomorphism.
\end{itemize}

Finally, we remark that inverse limits admit the following description:
\ba
X\equiv \varprojlim X_\gamma=\{(x_\gamma)\in \prod X_\gamma \q |\q p_{\gamma\gamma'}(x_{\gamma'})=x_{\gamma}\q \text{for all}\q \gamma\prec\gamma'\}\,.
\ea

%%%%%%%%%%%%%%%%%%%%%%%%%%%%%%%%%%%%%%%%%%%%%%%%%%%%%%%%%%%%%%%%%%%%%
\subsection{Direct or inductive limits}
Let $\cc$ be a category. A direct system in $\cc$ is a triple $(\cl, \{X_\gamma\},\{\iota_{\gamma'\gamma}\})$, where $\cl$ is a directed poset, $\{X_\gamma\}_{\gamma\in\cl}$ a collection of objects of $\cc$, and $\iota_{\gamma'\gamma}$ with $\gamma\prec\gamma'$ morphisms (injections) $\iota_{\gamma'\gamma}:X_{\gamma}\rightarrow X_{\gamma'}$ satisfying
\begin{itemize}
\item[(i)] $\iota_{\gamma\gamma}=\text{id}_{X_\gamma}$ for all $\gamma\in\cl$,
\item[(ii)] $\iota_{\gamma''\gamma'}\circ \iota_{\gamma'\gamma}=\iota_{\gamma''\gamma}$ whenever $\gamma\prec\gamma'\prec\gamma''$.
\end{itemize}

An object $X\in\text{Ob}(\cc)$ is called a direct or inductive limit of the system $(\cl, \{X_\gamma\},\{\iota_{\gamma'\gamma}\})$ and denoted $\varinjlim X_\gamma$, if there exist morphisms $\iota_\gamma:X_{\gamma}\rightarrow X$ for $\gamma\in\cl$ such that
\begin{itemize}
\item[(i)] for any $\gamma\prec\gamma'$ the diagram
\bas
\begin{CD} 
X @= X\\ 
@AA \iota_{\gamma} A @AA \iota_{\gamma'} A\\ 
X_{\gamma} @> \iota_{\gamma'\gamma}>> X_{\gamma'}\\
\end{CD}
\eas
commutes;
\item[(ii)] for any other $Y\in\text{Ob}(\cc)$ and morphisms $i_\gamma:X_\gamma\rightarrow Y$ with commuting diagram
\bas
\begin{CD} 
Y @= Y\\ 
@AA  i_{\gamma} A @AA i_{\gamma'} A\\ 
X_{\gamma} @> \iota_{\gamma'\gamma}>> X_{\gamma'}\\
\end{CD}
\eas
for $\gamma\prec\gamma'$, there exists an unique morphism $m:X\rightarrow Y$ such that the following diagram
\bas
\begin{CD} 
X_\gamma @= X_\gamma\\ 
@VV \iota_{\gamma} V @VV i_\gamma V\\ 
X @> m >> Y\\
\end{CD}
\eas
commutes.
That is, if the direct limit exists, it is unique up to $\cc$-isomorphism.
\end{itemize}

We remark here that we may define the inductive limit differently. Let $\sim$ be the following relation on $\cup X_\gamma$: for $x\in X_\gamma$ and $y\in X_{\gamma'}$, then $x\sim y$ if there exists $\gamma''\in\cl$ such that $\iota_{\gamma''\gamma}(x)=\iota_{\gamma''\gamma'}(y)$ (identifying each $X_\gamma$ with its image in $\cup X_\gamma$). Since $\cl$ is a directed set, $\sim$ is an equivalence relation and one can show that
\ba
X\equiv \varinjlim X_\gamma =\cup_\gamma X_\gamma / \sim\,.
\ea

%%%%%%%%%%%%%%%%%%%%%%%%%%%%%%%%%%%%%%%%%%%%%%%%%%%%%%%%%%%%%%%%%%%%
%\section*{References}
\bibliographystyle{JHEP}
\bibliography{thebibliography}

\end{document}